\newcommand{\eps}{\epsilon}
\renewcommand{\backref}[1]{}
\renewcommand{\backrefalt}[4]{%
\ifcase #1 %
\or
[p.\ #2]%
\else
[pp.\ #2]%
\fi}
\newtheorem{theorem}{Theorem}[section]
\newtheorem{lemma}[theorem]{Lemma}
\newtheorem{proposition}[theorem]{Proposition}
\newtheorem{corollary}[theorem]{Corollary}
\newtheorem{question}[theorem]{Question}
\newtheorem{fact}[theorem]{Fact}
\crefname{fact}{Fact}{Facts}
\theoremstyle{definition}
\newtheorem{definition}[theorem]{Definition}
\newtheorem{remark}[theorem]{Remark}
\renewcommand{\Pr}{\mathop{\bf Pr\/}}
\newcommand{\E}{\mathop{\bf E\/}}
\DeclareMathOperator{\tr}{tr}
\DeclareMathOperator{\poly}{poly}
\newcommand{\yes}{\text{yes}}
\newcommand{\no}{\text{no}}
\newcommand{\reals}{\mathbb R}
\newcommand{\complex}{\mathbb C}
\newcommand{\nats}{\mathbb N}
\newcommand{\C}{\complex}
\newcommand{\N}{\nats}
\newcommand{\class}[1]{\ensuremath{\mathsf{#1}}\xspace}
\mathchardef\mhyphen="2D %
\newcommand{\PTIME}{\class{P}}
\newcommand{\BPP}{\class{BPP}}
\newcommand{\NP}{\class{NP}}
\newcommand{\coNP}{\class{coNP}}
\newcommand{\MA}{\class{MA}}
\newcommand{\EXP}{\class{EXP}}
\newcommand{\NEXP}{\class{NEXP}}
\newcommand{\IP}{\class{IP}}
\newcommand{\PSPACE}{\class{PSPACE}}
\newcommand{\BQP}{\class{BQP}}
\newcommand{\QCMA}{\class{QCMA}}
\newcommand{\QMA}{\class{QMA}}
\newcommand{\QMAk}[1][k]{\class{QMA(#1)}}
\newcommand{\QMAtwo}{\QMAk[2]}
\newcommand{\QIP}{\class{QIP}}
\newcommand{\QIPk}[1][k]{\class{QIP(#1)}}
\newcommand{\QRG}{\class{QRG}}
\newcommand{\QRGk}[1][k]{\class{QRG(#1)}}
\newcommand{\QRGone}{\QRGk[1]}
\newcommand{\QRGtwo}{\QRGk[2]}
\newcommand{\RG}{\class{RG}}
\newcommand{\QMAM}{\class{QMAM}}
\newcommand{\CH}{\class{CH}}
\newcommand{\PP}{\class{PP}}
\newcommand{\EXPH}{\class{EXPH}}
\newcommand{\PH}{\class{PH}}
\newcommand{\Sigmai}[1][i]{\class{\Sigma^{p}_{#1}}}
\newcommand{\Pii}[1][i]{\class{\Pi^{p}_{#1}}}
\newcommand{\QCPH}{\class{QCPH}}
\newcommand{\QCSigmai}[1][i]{\class{QC\Sigma_{#1}}}
\newcommand{\QCPii}[1][i]{\class{QC\Pi_{#1}}}
\newcommand{\QPH}{\class{QPH}}
\newcommand{\QSigmai}[1][i]{\class{Q\Sigma_{#1}}}
\newcommand{\QPii}[1][i]{\class{Q\Pi_{#1}}}
\newcommand{\pureQPH}{\class{pureQPH}}
\newcommand{\QEPH}{\class{QEPH}}
\newcommand{\entQSigmai}[1][i]{\class{QE\Sigma_{#1}}}
\newcommand{\entQPii}[1][i]{\class{QE\Pi_{#1}}}
\newcommand{\entQDeltai}[1][i]{\class{QE\Delta_{#1}}}
\newcommand{\QMAH}{\class{QMAH}}
\newcommand{\QCMAH}{\class{QCMAH}}
\newcommand{\StwoP}{\class{S_{2}P}}
\newcommand{\distributionPH}{\mixedPH}
\newcommand{\distributionQCPH}{\mathsf{DistributionQCPH}}
\newcommand{\mixedPH}{\class{DistributionPH}}
\newcommand{\mixedSigmai}[1][i]{\mathsf{Distribution\Sigma_{#1}}}
\newcommand{\prob}[1]{\textsc{#1}}
\newcommand{\calA}{\mathcal{A}}
\newcommand{\calB}{\mathcal{B}}
\newcommand{\calC}{\mathcal{C}}
\newcommand{\calD}{\mathcal{D}}
\newcommand{\calH}{\mathcal{H}}
\newcommand{\calM}{\mathcal{M}}
\newcommand{\calX}{\mathcal{X}}
\newcommand{\calY}{\mathcal{Y}}
\newcommand{\eg}{e.g.\xspace}
\newcommand{\abs}[1]{\left\lvert #1 \right\rvert}
\newcommand{\qreg}[1]{\mathsf{#1}}
\title{The Entangled Quantum Polynomial Hierarchy Collapses}
 \author{
     Sabee Grewal\thanks{Email: \href{mailto:sabee@cs.utexas.edu}{sabee@cs.utexas.edu}, \href{mailto:yirka@utexas.edu}{yirka@utexas.edu}. The University of Texas at Austin, USA. Supported via Scott Aaronson by a Vannevar Bush Fellowship from the US Department of Defense, the Berkeley NSF-QLCI CIQC Center, a Simons Investigator Award, and the Simons “It from Qubit” collaboration.}
     \and 
     Justin Yirka\footnotemark[1] 
 }
\date{}
\begin{document}

\maketitle

\begin{abstract}

We introduce the entangled quantum polynomial hierarchy $\QEPH$ as the class of problems that are efficiently verifiable given alternating quantum proofs that may be entangled with each other.
We prove $\QEPH$ collapses to its second level. In fact, we show that a polynomial number of alternations collapses to just two.
As a consequence, $\QEPH = \QRGk[1]$, the class of problems having one-turn quantum refereed games, which is known to be contained in $\PSPACE$.
This is in contrast to the \emph{unentangled} quantum polynomial hierarchy $\QPH$, which contains $\QMAtwo$.

We also introduce a generalization of the quantum-classical polynomial hierarchy $\QCPH$ where the provers send probability distributions over strings (instead of strings) and denote it by $\distributionQCPH$. 
Conceptually, this class is intermediate between $\QCPH$ and $\QPH$. 
We prove $\distributionQCPH = \QCPH$, suggesting that only quantum superposition (not classical probability) increases the computational power of these hierarchies. 
To prove this equality, we generalize a game-theoretic result of Lipton and Young (1994) which says that the provers can send distributions that are uniform over a polynomial-size support. 
We also prove the analogous result for the polynomial hierarchy, i.e., $\mixedPH = \PH$.
These results also rule out certain approaches for showing $\QPH$ collapses.

Finally, we show that $\PH$ and $\QCPH$ are contained in $\QPH$, resolving an open question of Gharibian et al.\ (2022).
\end{abstract}

\section{Introduction}\label{sec:intro}

The polynomial hierarchy \cite{meyer1972equivalence,stockmeyer1976polynomial} is a hierarchy of complexity classes that are known to equal $\PTIME$ if and only if $\PTIME = \NP$. 
The hierarchy, denoted by $\PH$, is a natural generalization of efficient proof verification and nondeterminism and plays a central role in complexity theory. 
Given its significance, it is natural to explore quantum generalizations of $\PH$, yet such generalizations remain understudied.

Before discussing quantum polynomial hierarchies, let us first informally define $\PH$. 
Intuitively, $\PH$ is a hierarchy of complexity classes that can solve progressively harder problems, extending beyond both \NP and \coNP.
One can think of $\PH$ as a public debate between Alice and Bob, who take turns presenting polynomial-sized proofs (bit strings) to a referee. 
At the end of the debate, the referee takes the proofs, performs a polynomial-time classical computation, and decides a winner. 

More formally, a problem is in the $k$-th level of the polynomial hierarchy $\Sigmai[k]$ if there is a deterministic polynomial-time verifier $M$ (the referee) that takes proofs $y_1, \dots, y_k$ and satisfies the following conditions. On yes-instances, $\exists y_1 \forall y_2 \exists y_3 \ldots$ such that $M(y_1,\ldots,y_k)=1$, and, on no-instances, $\forall y_1 \exists y_2 \forall y_3 \ldots$ such that $M(y_1,\ldots,y_k)=0$. $\PH$ is comprised of every level $\Sigmai[k]$ for all natural numbers $k$, and it is strongly believed that $\PH$ is infinite.

Gharibian, Santha, Sikora, Sundaram, and Yirka \cite{GSSSY22_qph} studied two quantum generalizations of $\PH$. 
They generalized the class $\QCMA$ to the quantum-classical polynomial hierarchy $\QCPH$, the class of problems for which a quantum verifier can efficiently verify solutions given a constant number of classical proofs from competing provers. Note that this is the same as $\PH$ except the verifier can perform a polynomial-time \emph{quantum} computation.
In the same work, they generalized the class $\QMAtwo$ to the \emph{unentangled} quantum polynomial hierarchy $\QPH$, 
for which the verifier is still quantum, but the proofs are quantum mixed states and promised to be unentangled from each other.
Notably, Gharibian et al.\ did not introduce a hierarchy in which the proofs can be \emph{entangled}, and they did not establish a relationship between $\QPH$ and $\QCPH$ (or even $\QPH$ and $\PH$), leaving it unclear whether or not $\QPH$ was at least as powerful as its classical counterpart.\footnote{While these containments are what one might guess to be true, proving them is nontrivial.} 
More generally, if $\QCPH$ and $\QPH$ are indeed more powerful, it prompts the question of why: is it quantum verification, quantum proofs, unentanglement, or some nuanced combination?

In this work, we address all of these questions related to quantum generalizations of the polynomial hierarchy.
First, we ask (and answer) what problems admit a $\PH$-style protocol where the provers can send potentially entangled proofs. 
We show that this new hierarchy---the entangled quantum polynomial hierarchy ($\QEPH$)---behaves drastically differently from what we believe about $\PH$, $\QCPH$, and $\QPH$.

Second, we prove that $\PH \subseteq \QCPH \subseteq \QPH$, 
confirming the intuitive relationship between these hierarchies. 

Lastly, to understand the power of quantum proofs, we introduce a generalization of $\QCPH$ where the provers send probability distributions over classical proofs and denote the class by $\distributionQCPH$. 
We prove that $\distributionQCPH = \QCPH$, despite the intuition from game theory that mixed strategies are helpful.
Note that the only difference between $\distributionQCPH$ and $\QPH$ is that the proofs in $\QPH$ involve quantum superposition.
Hence, our result establishes that the increased computational power of $\QPH$ comes only from the quantum superposition in the proofs.

\subsection{Our Results}

Our first main result is a characterization of our newly defined hierarchy $\QEPH$ (\cref{def:entqph}) via a collapse to its second level. This collapse is in stark contrast to our belief that $\PH$ is infinite.

\begin{theorem}[Combination of \cref{lem:ent-qph-collapse,thm:qrg-one-equals-pspace}] 
$\QEPH$ collapses to its second level and equals $\QRGone$.
\end{theorem}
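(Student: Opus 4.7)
The plan is to split the theorem into two parts, mirroring the two cited results: first, establishing that $\QEPH$ collapses to its second level, and second, identifying that second level with $\QRGone$.

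For the identification of the second level of $\QEPH$ with $\QRGone$, I would match up the two definitions. A two-alternation $\QEPH$ protocol has an existential prover sending $\rho_1$, a universal prover sending an extension $\rho_{12}$ of $\rho_1$ to a second register, and a polynomial-time quantum verifier acting on $\rho_{12}$. A one-turn quantum refereed game has two non-communicating provers (who may share entanglement) each sending one message to the verifier. One direction is essentially immediate: if the universal prover can see and react to the existential prover's state, she is at least as powerful as in simultaneous play, so the sequential model is no harder for her. The reverse direction---showing that her additional information in the sequential model does not actually help---follows from a minimax argument on the acceptance probability, which is linear in each prover's operator over convex, compact strategy sets.

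For the collapse, I would induct on the number of alternations. Consecutive same-type quantifiers merge trivially: $\exists \rho \exists \sigma$ with $\sigma$ extending $\rho$ is equivalent to a single existential over the joint register, and likewise for $\forall$. The nontrivial work is shortening alternating chains of length at least three. To do this, I would swap an innermost $\forall \exists$ pair using a quantum version of Sion's minimax theorem, exploiting that both players select from convex, compact sets of density operators (extensions of prescribed marginals) and that the verifier's acceptance probability is linear in each argument. After the swap, the outer $\forall$-player's chosen state is uniquely determined as a marginal of the inner $\exists$-player's choice, which renders the $\forall$ quantifier vacuous and reduces the alternation count by two. Iterating brings any polynomial-level $\QEPH$ protocol down to two levels.

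The main obstacle will be rigorously justifying the quantifier swap. The marginal-matching constraint couples the two players' state choices, making a direct application of Sion's minimax awkward; I also need to make sure that the promise gap is preserved and that the reduction is efficient. I would address the swap either by dualizing the marginal constraint (for example, via Lagrange multipliers to decouple the variables before applying minimax) or by working within the quantum strategies framework of Gutoski and Watrous, whose convex-geometric description of adaptive quantum interactions is well-suited to round-reduction arguments of exactly this kind.
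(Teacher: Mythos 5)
Your proposal follows essentially the same route as the paper: identify $\entQSigmai[2]$ with $\QRGone$ via a min--max argument (the paper does this in \cref{prop:second-levels-equal}, building on \cite{GSSSY22_qph}), and collapse the higher levels by swapping the innermost alternation using Sion's theorem over compact, convex sets of marginal-constrained density operators and then merging the resulting adjacent same-player quantifiers (\cref{lem:ent-qph-collapse}). Two of your statements need correcting, though neither breaks the argument. First, the marginal-matching constraint does \emph{not} couple the two players' choices to each other: it only ties one player's later move to that \emph{same} player's earlier move, and since the earlier move is quantified outside the pair being swapped, the inner $\min$ and $\max$ range over two independent, fixed, compact convex sets (\cref{fact:density-matrices-compact-convex,fact:partial-trace-compact-convex}) with a bilinear objective, so Sion applies directly --- no Lagrangian dualization or Gutoski--Watrous strategy formalism is needed. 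Second, your description of the merge is off: after rewriting $\max_{\rho_1}\min_{\sigma_1}\max_{\rho_2}$ as $\max_{\rho_1}\max_{\rho_2}\min_{\sigma_1}$, it is the outer \emph{existential} quantifier (belonging to the same player as the inner one) that becomes redundant, because $\rho_1 = \tr_{\calX_2}(\rho_2)$ is determined by $\rho_2$; a $\forall$-player's state is never a marginal of an $\exists$-player's state, as the two players occupy disjoint registers. Each swap-and-merge lowers the level by one, and the induction should proceed from the inside out so that the swapped pair always has the bare (bilinear) acceptance probability $\tr\left(R(\rho\otimes\sigma)\right)$ as its objective; since the reduction keeps the same verifier and the same game value, the promise gap is preserved exactly, addressing your remaining concern.
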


This collapse is similar to others known in quantum complexity theory, such as $\QIP = \QIPk[3] = \QMAM$ \cite{kitaev2000parallelization,marriott2005quantum}, 
in which the protocols rely on the prover's ability to entangle their messages.
We further compare $\QEPH$ to other complexity classes involving entangled proofs in \nameref{subsec:related-work}.

We show that $\QEPH$ equals $\QRGone$, the class of problems having one-turn quantum-refereed games.\footnote{The class $\QRG(k)$ and its classical analogue $\RG(k)$ have been numbered differently by different authors. We follow recent conventions
where the provers' and the referee's messages are counted separately.
So, e.g., in $\QRGtwo$ the referee sends one message and then the provers each simultaneously send a message.}
$\QRGone$ involves a game between two competing players that each privately sends a quantum state to a referee, who then performs a polynomial-time quantum computation to determine a winner.
In 2009, Jain and Watrous \cite{jain2009parallel} proved $\QRGone \subseteq \PSPACE$. 
However, it is conjectured that $\QRGone$ is strictly less powerful than $\QRGtwo=\PSPACE$ \cite{gutoski2013parallel,ghosh2023complexity}. 
Yet despite effort, no improved upper bounds on $\QRGone$ have been proven in over a decade. 
We suggest a new approach to improving the upper bound on $\QRGone$ (via the connection to $\QEPH$) in \nameref{sec:openproblems}.

Our collapse result is stronger than stated above.
It is well-known that if one extends $\PH$ to a polynomial number of rounds (rather than a constant number), then the resulting class equals $\PSPACE$ \cite[Theorem 4.11]{arora2009computational}.
In contrast, we show that extending $\QEPH$ to a polynomial number of rounds does not increase the power of the class. 

\begin{theorem}[Informal version of \cref{corr:entQSigmaPoly=2}]
   Even with a polynomial number of rounds, $\QEPH$ collapses to its second level.
\end{theorem}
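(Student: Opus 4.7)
The plan is to extend the level-2 collapse argument (\cref{lem:ent-qph-collapse}) from a constant number of rounds to $\poly(n)$ rounds. Since the collapsed class is $\QRGone$, it suffices to show $\entQSigmai[k] \subseteq \QRGone$ for $k = \poly(n)$; combined with the trivial inclusion $\entQSigmai[2] \subseteq \entQSigmai[k]$ and the already-established equality $\QRGone = \entQSigmai[2]$, this yields the desired collapse.

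The high-level idea is to group the $k/2$ odd-indexed provers as an \emph{existential team} cooperating to make the verifier accept, and the $k/2$ even-indexed provers as a \emph{universal team} cooperating to reject. Because proofs across rounds may be entangled within a team, each team's strategy is effectively a single jointly prepared state together with adaptive processing on received messages. Using a gate-teleportation representation of strategies (as in the Kitaev--Watrous proof of $\QIP = \QMAM$), each team's $k/2$-round strategy can be encoded into a single pure-state preparation on $\poly(n)$ qubits, plus canonical operations (Bell-basis measurements and Pauli corrections) that the referee can perform on their behalf. Each team then sends one quantum state upfront, and the referee simulates the $k$-round interaction entirely using these two states, extracting and exchanging the appropriate message registers round by round while applying the teleportation-based corrections.

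This transformation yields a $\QRGone$ protocol: each team sends exactly one quantum message, the referee performs a polynomial-time quantum computation on the two states, and the game value is preserved. Should the compression introduce an inverse-polynomial loss in the completeness--soundness gap, standard amplification for $\QRGone$ (e.g., majority vote over independent repetitions) can restore a constant gap.

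The main obstacle I expect is formalizing the strategy compression so that it faithfully preserves the adversarial game value for two competing teams. In $\QIP = \QMAM$, the commit-and-teleport trick compresses a single prover's strategy against a fixed verifier. Here, two adversarial teams must compress simultaneously, and we must ensure that neither team gains an unfair advantage from the compression and that the min-max value of the game is unchanged. This will likely require a direct argument in the Gutoski--Watrous quantum strategy formalism, where a team's strategy across any number of rounds is captured by a single semidefinite object and the game value is a bilinear form, so that the $k$-round game value can be matched exactly by a 2-round game value by re-expressing the relevant SDP.
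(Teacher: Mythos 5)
There is a genuine gap, and you have actually named it yourself: the step you defer as ``the main obstacle'' --- showing that the $k$-round alternating-quantifier value equals the value of a single simultaneous exchange --- is the entire content of the theorem, and nothing in your teleportation/compression machinery supplies it. In $\QEPH$ the verifier never sends messages to the provers, so there is no adaptive response to a verifier to commit to or teleport against; a team's whole multi-round strategy is \emph{already} a single density operator on the tensor product of its registers, with the partial-trace conditions just saying that later states extend earlier marginals (\cref{def:entqsigmai}). The only genuine adaptivity is that a player's round-$j$ move may depend on the \emph{opponent's} public rounds $<j$, which is exactly what the interleaved $\max/\min$ order in \cref{eqn:minmaxdef} encodes, and no encoding of each team's strategy as one object can remove that dependence. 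What removes it is \cref{fact:min-max}: the paper's proof of \cref{lem:ent-qph-collapse} swaps the innermost $\min$ past the adjacent $\max$ (legitimate because $\mathbf{D}(\calY)$ and the partial-trace-constrained set $\calA$ are compact and convex by \cref{fact:density-matrices-compact-convex,fact:partial-trace-compact-convex}, and the payoff is bilinear), then merges the two adjacent maximizations into one over $\mathbf{D}(\calX_1\otimes\calX_2)$. Your SDP/Gutoski--Watrous reformulation would, in the end, have to reprove exactly this swap, so the detour buys nothing.

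The reason the corollary extends to polynomially many rounds is also different from what you propose. Each min-max swap keeps the same input, the same verifier observable $R$, and the same game value, so the reduction is \emph{exact}: there is no loss in the completeness--soundness gap to amplify away, and the only resource that grows is the total length of the concatenated proof registers, which stays polynomial for polynomially many rounds. Your fallback of ``majority vote over independent repetitions'' to repair an inverse-polynomial loss would not obviously work if each of $\poly(n)$ compression steps cost you an inverse-polynomial amount of gap --- the losses could accumulate to a constant or worse --- so it is important that the paper's reduction incurs no loss at all. To repair your argument, drop the teleportation layer, observe that a team's strategy is a single state with marginal constraints, and carry out the iterated Sion swap directly, tracking that the optimized quantity never changes.
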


One interpretation of our collapse result is that allowing provers to entangle their proofs gives them too much opportunity to cheat. Hence, receiving a single proof from each prover is just as useful as receiving many entangled proofs. 

Before this work, it was unclear how the quantum polynomial hierarchies compared to one another, and if $\QPH$ even contained $\PH$. 
In our second result, we establish the following containments between the quantum and classical hierarchies, resolving an open question of Gharibian et al. \cite{GSSSY22_qph}.

\begin{theorem}[Restatement of \cref{thm:direct-proof-ph-prodqph}]
    $\PH \subseteq \QCPH \subseteq \QPH$.
\end{theorem}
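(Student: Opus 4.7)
The plan is to establish the two containments separately. The first, $\PH \subseteq \QCPH$, is essentially immediate: a deterministic polynomial-time classical verifier is a special case of a polynomial-time quantum verifier (trivially realizing completeness $1$ and soundness $0$), and the classical-string proofs and alternating-quantifier game structure carry over unchanged, so the $\QCPH$ thresholds $[2/3, 1/3]$ are trivially met.

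For the second containment, $\QCPH \subseteq \QPH$, the natural attempt is to have the $\QPH$ verifier first measure each prover's quantum proof in the computational basis and then run the original $\QCPH$ verifier on the resulting classical strings. A prover's state $\rho_i$ produces upon measurement a sample from its diagonal distribution $\mu_i$, and conversely every distribution over strings $\mu_i$ is realized by some pure state (e.g., $\sum_y \sqrt{\mu_i(y)}\ket{y}$). Consequently, the value of this simulated $\QPH$ protocol is exactly the value of the game in which provers submit arbitrary probability distributions over classical strings, i.e., the $\distributionQCPH$ value. The main obstacle is that this distributional value can strictly exceed the original $\QCPH$ value, since an adversarial prover can gain by randomizing (as in matching-pennies-style examples, where the pure minimax value is $0$ but the mixed value is close to $1$); hence the naive simulation risks breaking soundness.

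To overcome this obstacle, I would invoke the containment $\QCPH \subseteq \distributionQCPH$, which is the nontrivial direction of the $\distributionQCPH = \QCPH$ equality established separately in the paper via an adaptation of Lipton and Young's theorem. That result furnishes, for any $\QCPH$ problem, a verifier whose $[2/3, 1/3]$ gap survives even against provers sending arbitrary probability distributions. Composing this distribution-robust verifier with computational-basis measurement then yields a $\QPH$ protocol inheriting the same $[2/3, 1/3]$ gap, giving $\QCPH \subseteq \distributionQCPH \subseteq \QPH$. Thus the principal technical burden of $\QCPH \subseteq \QPH$ is packaged into the separately-proved $\QCPH \subseteq \distributionQCPH$, for which the Lipton--Young technique supplies the key fact that near-optimal mixed strategies in a bounded zero-sum game admit polynomial-size-support approximations, enabling an amplification-and-sampling argument that transfers the $[2/3, 1/3]$ gap from the pure-strategy to the mixed-strategy setting.
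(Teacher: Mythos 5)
Your first containment and your diagnosis of the obstacle for the second are both correct, but the resolution you propose has the logic of the paper's treatment of distribution hierarchies backwards, and as a result the hard step is never actually proved. The Lipton--Young generalization (\cref{lem:lipton}) shows that near-optimal \emph{mixed} strategies can be sparsified to uniform distributions over polynomially many strings; in the paper this is used to prove $\distributionQCPH \subseteq \QCPH$, i.e., that distributions do \emph{not} add power. It says nothing about the direction you need, $\QCPH \subseteq \distributionQCPH$: sparsifying mixed strategies cannot close the gap between the pure (alternating-quantifier) value and the mixed value of a fixed verifier, which is exactly the matching-pennies gap you yourself identify. Worse, the paper proves $\QCPH \subseteq \distributionQCPH$ \emph{by} the same technique it uses for $\QCPH \subseteq \QPH$, so routing through $\distributionQCPH$ is circular: the ``separately-proved'' containment you invoke is obtained from the very argument you are trying to avoid, not from the Lipton--Young lemma.

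The missing idea is the concrete mechanism that makes the verifier robust against randomized (or quantum) cheating: the paper simulates $\QCSigmai[i]$ inside $\QSigmai[2ki]$ by demanding $k$ copies of each proof, measuring each block of $k$ in the computational basis, and declaring the sender the loser if the $k$ outcomes disagree. The honest prover sends genuine basis states and responds to an adversarial state $\rho = \sum_j p_j \ketbra{j}{j}$ as if its modal string $\hat{\jmath}$ had been sent; since every other string has probability at most $1/2$, the event that all $k$ copies agree on some string other than $\hat{\jmath}$ has probability at most $2^{-k}$, yielding parameters $c\left(1-2^{-k}\right)$ and $s + 2^{-k}(1-s)$. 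Note that this costs a constant-factor blowup in the level (level $i$ lands in level $2ki$), which your level-preserving measure-and-run reduction would not incur---but without the consistency check that reduction is unsound.
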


We emphasize that even $\PH \subseteq \QPH$ is not obvious.
Placing restrictions on the provers can sometimes increase computational power, as was the case in, e.g., the recent results showing that $\QMA^+ = \QMAtwo^+ = \NEXP$ \cite{jeronimo2023power, bassirian2023quantum}. 
Meanwhile, the permissiveness of $\QEPH$, where we allow the provers to entangle their proofs, seems to yield a weaker class than $\QPH$.

In our third result, we show that the power of $\QCPH$ does not change if the provers are allowed to send probability distributions (instead of a fixed classical proof). 

\begin{theorem}[Restatement of \cref{cor:distqcph=qcph}]\label{thm:intro-distqcph=qcph}
   $\distributionQCPH = \QCPH.$
\end{theorem}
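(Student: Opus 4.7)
The inclusion $\QCPH \subseteq \distributionQCPH$ is immediate, since a classical proof $y_i$ is just the point distribution $\delta_{y_i}$. The real work is in the reverse inclusion, and my plan is to compile any $\distributionQCPH$ protocol into a $\QCPH$ protocol with comparable completeness and soundness. The argument has two pieces: (i) a game-theoretic lemma asserting that in any $\distributionQCPH$ protocol there are near-optimal strategies in which each prover's distribution is uniform over a multiset of $\poly(n)$ strings, and (ii) a compilation in which each prover transmits such a multiset as a single classical string of length $\poly(n)$ and the quantum verifier simulates the sampling in superposition. Concretely, the $\QCPH$ verifier receives lists $S_1, \ldots, S_k$, prepares $k$ uniform superpositions $\frac{1}{\sqrt{m}} \sum_{i=1}^{m} \ket{i}$ over indices, coherently fetches the corresponding strings, and runs the original $\distributionQCPH$ verifier on them. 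Measuring the output then yields acceptance probability $\frac{1}{m^{k}} \sum_{(y_1,\ldots,y_k) \in S_1 \times \cdots \times S_k} \Pr[V(y_1,\ldots,y_k)\text{ accepts}]$, which is precisely the acceptance probability of the $\distributionQCPH$ execution on these distributions; standard amplification absorbs the $\eps$ loss introduced by (i).

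The core technical ingredient is a $k$-round generalization of the theorem of Lipton and Young (1994), which says that in any two-player zero-sum matrix game with $N$ pure strategies per player and payoffs in $[0,1]$, each player has a near-optimal mixed strategy uniform over $O(\log N / \eps^2)$ pure strategies. The two-round case of $\distributionQCPH$ is Lipton--Young applied directly to the payoff $V(y_1, y_2)$: for an optimal $D_1^{*}$, sample $m = O(n / \eps^2)$ strings i.i.d.\ from $D_1^{*}$, and apply Hoeffding plus a union bound over the $2^n$ pure responses $y_2$ to conclude that the empirical $\tilde{D}_1$ satisfies $\Ex_{y_1 \sim \tilde{D}_1}[V(y_1, y_2)] \geq v - \eps$ for every $y_2$; the symmetric claim holds for $\tilde{D}_2$. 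For $k \geq 3$, I would proceed by induction on $k$: peel off the outermost prover using Lipton--Young and invoke the inductive hypothesis on the resulting $(k-1)$-round subgame.

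The main obstacle is keeping the support size genuinely polynomial in the inductive step. A direct execution of Lipton--Young at the outer level would require a union bound over all pure behavior strategies of the opponents for the $(k-1)$-round subgame, of which there are $N^{N^{k-1}}$ — doubly exponential in $n$ — pushing the required $m$ far beyond polynomial. To circumvent this I would invoke the inductive hypothesis \emph{before} the union bound, so that the opponents need only be considered over uniform-support strategies, of which there are only $2^{\poly(n)}$, and Lipton--Young again yields a polynomial support. Making this interleaving precise — in particular, verifying that restricting opponents to uniform-support strategies does not shift the outer game's value by more than $\eps$ — is the delicate part of the proof. The same argument also yields $\mixedPH = \PH$, establishing the analogous collapse for the classical polynomial hierarchy.
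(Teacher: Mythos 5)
Your treatment of the reverse inclusion $\distributionQCPH \subseteq \QCPH$ matches the paper's: the paper's key lemma (\cref{lem:lipton}) is exactly the $k$-round generalization of Lipton--Young you describe, and its proof resolves the doubly-exponential union-bound obstacle precisely by the interleaving you propose --- it first applies the inductive hypothesis to sparsify the inner $k-1$ distributions, and only then sparsifies the outermost one via a union bound over the minimizing player's sequences of uniform-support strategies, of which there are only $2^{\poly(m)}$ (this is why the paper's support bounds $t_i = \lceil m^{2i}/2\eps^2\rceil$ grow with the quantifier depth). So the step you flag as delicate is the right concern, and your proposed fix is the one that works.

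The genuine gap is in the direction you dismiss as immediate. $\QCPH \subseteq \distributionQCPH$ does \emph{not} follow by replacing each $y_i$ with the point distribution $\delta_{y_i}$, because the universal quantifiers in $\distributionQCPH$ range over \emph{all} distributions, not just point masses. On a yes-instance, $\QCPH$ guarantees the existential prover a good response to each fixed string the opponent might send; it does not guarantee a single later move $\rho_3$ that does well in expectation when the opponent commits to a nontrivial mixture $\rho_2$ and the verifier only samples $y_2\sim\rho_2$ after all moves are fixed. (At levels $i\le 2$ the issue does not arise, since linearity suffices, but already at the third level min--max over pure responses need not equal max--min.) The paper handles this with the same machinery as its proof that $\QCPH\subseteq\QPH$ (\cref{thm:direct-proof-ph-prodqph}): each prover sends $k$ copies of each proof, the verifier checks that the sampled strings within each block agree, and this forces the dishonest prover's distributions to be nearly point masses, at the cost of inflating the level by a constant factor and degrading the promise gap to a constant (recovered afterwards by $\QCPH$ error reduction). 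Without some such mechanism, your forward inclusion is unproven.
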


Our motivation for studying $\distributionQCPH$ is to better understand the power of quantum proofs. 
In particular, let $\pureQPH$ be the same as $\QPH$ except the quantum proofs are pure states rather than mixed states.\footnote{It is easy to see that $\QPH \subseteq \pureQPH$ since the provers can send purifications of their mixed proofs.\label{footnote:pureQPH}} 
Then the \emph{only} difference between $\pureQPH$ and $\distributionQCPH$ is that the former involves proofs that are quantum superpositions over bit strings while the latter involves proofs that are classical distributions over bit strings. 
Yet $\distributionQCPH = \QCPH$ is in the counting hierarchy \cite{GSSSY22_qph}, and $\pureQPH$ contains $\QMAtwo$ and is contained in $\EXP^\PP$ \cite{agarwal2023quantum}. Conceptually, our result says that any increase in computational power only comes from the quantum superposition in the proofs.

\cref{thm:intro-distqcph=qcph} also goes through for $\PH$. 

\begin{theorem}[Restatement of \cref{thm:mixedph=ph}]\label{thm:intro-mixedph}
   $\distributionPH = \PH.$
\end{theorem}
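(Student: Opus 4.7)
The direction $\PH \subseteq \distributionPH$ is immediate: a deterministic string $y_i$ in a $\Sigmai[k]$ protocol is a point-mass distribution in $\distributionPH$, and the acceptance condition $M(y_1,\dots,y_k)=1$ becomes an expected acceptance value of $1$ or $0$. The work lies in showing $\distributionPH \subseteq \PH$.

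For that direction, I would piggyback on the generalized Lipton--Young support-reduction result the authors establish earlier in the paper to prove $\distributionQCPH = \QCPH$. That lemma says that in an alternating-quantifier game played with arbitrary distributions over $\{0,1\}^{\poly(n)}$, the value is matched, up to additive error $\eps$, by strategies that are \emph{uniform} over a multiset of size $t = \poly(n, 1/\eps, k)$. Applied to a $\mixedSigmai[k]$ protocol, this lets me replace each prover's distribution by a multiset $S_i \subseteq \{0,1\}^{\poly(n)}$ with $|S_i| = \poly(n)$ (since $k$ is constant and I may choose $\eps = 1/\poly(n)$), while shifting completeness and soundness by only $O(k\eps)$.

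I would then build a $\Sigmai[k]$ protocol as follows. At level $i$, the prover sends the concatenation of the strings in $S_i$ as its $\PH$-proof. The deterministic polynomial-time $\PH$ verifier enumerates all $\prod_{i=1}^{k}|S_i| = \poly(n)$ tuples $(y_1,\dots,y_k) \in S_1 \times \cdots \times S_k$, runs the original verifier $M$ on each, tallies the fraction of accepting tuples, and accepts iff this fraction exceeds a threshold strictly between the shifted completeness $c - O(k\eps)$ and shifted soundness $s + O(k\eps)$. Because the original gap $c - s$ is constant and $k\eps$ may be made an arbitrarily small inverse-polynomial, such a threshold exists and the whole computation runs in polynomial time.

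The main obstacle is the multi-level extension of Lipton--Young itself. The classical theorem is for two-player zero-sum games, whereas here we need a $k$-fold alternation of $\exists$/$\forall$ over distributions. I expect to handle this by induction on $k$: fix the outermost prover's distribution, view the residual $(k{-}1)$-level game as a bounded real-valued payoff on that prover's strategies, apply Chernoff plus a union bound over the finitely-supported opponent strategies that matter after the inductive sparsification, and recurse. The subtle point is that the lemma must output a \emph{uniform} multiset rather than an arbitrary sparse distribution with rational weights, because the $\PH$ verifier's exact enumeration has to reproduce the expectation under the chosen distribution; it is precisely this uniform-support form that allows the probabilistic $\distributionPH$ game to be derandomized into a deterministic $\PH$ computation.
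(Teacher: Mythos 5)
Your hard direction ($\distributionPH \subseteq \PH$) is essentially the paper's proof: the same generalized Lipton--Young sparsification to uniform poly-size multisets, proved by the same induction (fix the outermost distribution, Chernoff, union bound over the opponent's sparsified strategies), followed by sending the multiset as the $\PH$ proof. Your exact enumeration over $S_1\times\cdots\times S_k$ is a harmless variant of the paper's ``sample one string uniformly at random''; just note that the verifier $M$ in $\mixedSigmai[k]$ is itself a \emph{randomized} machine, so either way you still need one appeal to Sipser--Lautemann-style derandomization of $M$'s internal coins, which your write-up omits.

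The genuine gap is in the direction you call immediate. Embedding a $\Sigmai[k]$ protocol into $\mixedSigmai[k]$ by point masses fails for $k\geq 2$, because the quantifier structure of $\distributionPH$ lets the \emph{universally} quantified (dishonest) prover deviate to a genuinely mixed distribution, while the honest prover must commit to each later distribution before anything is sampled. The $\PH$ guarantee only says the honest prover has a good response to each \emph{string}; it does not give a single response that works against a mixture. Concretely, take the $\Sigmai[2]$ no-instance condition $\forall y_1\,\exists y_2$ with $M(y_1,y_2)=1$ iff $y_1\neq y_2$: every fixed $y_1$ has a refuting $y_2=y_1$, yet if Alice sends the uniform distribution over $\{0,1\}^m$, every committed $\rho_2$ accepts with probability $1-2^{-m}$, so the soundness bound is destroyed and the promise gap vanishes. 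This is exactly the obstruction the paper describes for $\QCPH\subseteq\QPH$, and it is why the paper proves $\PH\subseteq\distributionPH$ by reusing that theorem's machinery: each prover sends $k$ independent copies of each proof in consecutive rounds, the verifier samples all copies and rejects the sender unless they agree, which forces near-point-mass behavior at the cost of inflating level $i$ to level $2ki$ and settling for a constant promise gap. Without some such consistency test, your easy direction does not go through.
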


An easy consequence of our result is that $\distributionPH$ collapses if and only if $\PH$ collapses.\footnote{$\class{Distribution}\calC$ is not to be confused with the notation $\class{Dist}\calC$, which has been used in average-case complexity theory, \eg{} $\class{DistNP}$ and $\class{DistPH}$.
Also, similar names like \prob{Stochastic SAT} or \prob{Probabilistic QBF} have appeared in the study of randomized quantifiers. These are more similar to the Arthur-Merlin ($\class{AM}$) hierarchy. 
}
Therefore, any attempts to collapse $\QCPH$, $\QPH$, or $\pureQPH$ must not collapse $\distributionPH$, and so \cref{thm:intro-mixedph} rules out some approaches to collapsing these hierarchies.
In particular, one line of attack to showing $\QMAtwo = \NEXP$ is to show that the $\forall$ quantifier in $\QSigmai[3]$ does not add any computational power, because $\QMAtwo \subseteq \QSigmai[3] \subseteq \NEXP$ \cite{GSSSY22_qph}. \cref{thm:intro-distqcph=qcph,thm:intro-mixedph} are evidence that this line of attack won't work, since showing the analogous result for $\distributionPH$ would collapse the polynomial hierarchy.

We give a graphical description of our results and the quantum polynomial hierarchy landscape in \cref{fig:hierarchy-relations}.

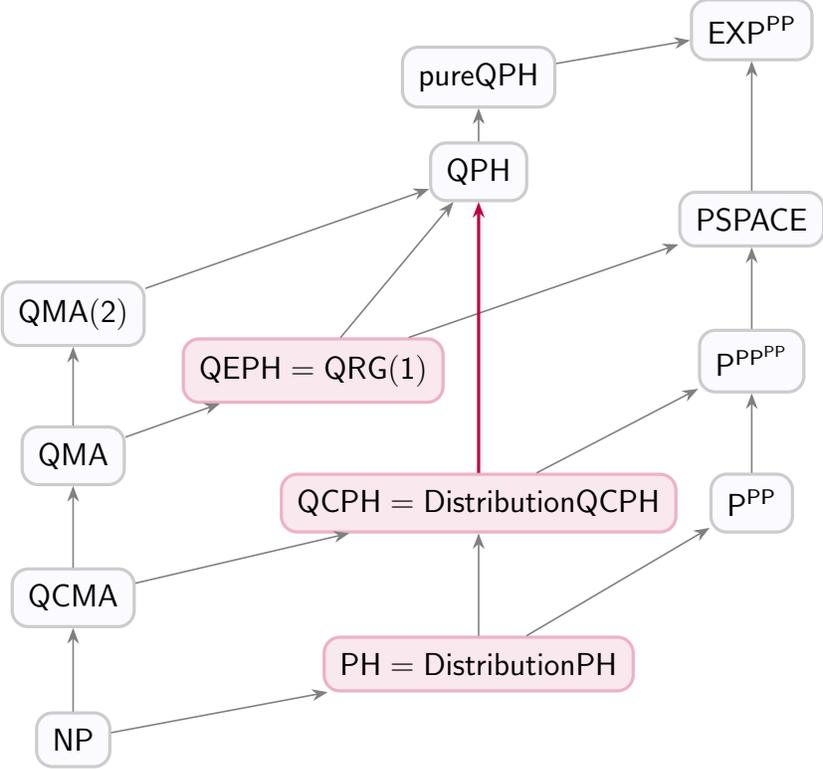
\begin{figure}[ht]
\centering
\begin{tikzpicture}[scale=1.1]
\tikzset{inner sep=0,outer sep=3}
\tikzstyle{a}=[inner sep=6pt, inner ysep=6pt,outer sep=0.5pt,
draw=black!20!white, fill=blue!2!white, very thick, rounded corners=6pt, align=center]
\begin{scope}[yscale=1.145]
\tikzstyle{b}=[inner sep=6pt, inner ysep=6pt,outer sep=0.5pt,
draw=purple!30!white, fill=purple!9!white, very thick, rounded corners=6pt, align=center]
\large
\node[a] (NP) at (-6.2,1) {$\NP$};
\node[a] (QCMA) at (-6.2,2.5) {$\QCMA$};
\node[a] (QMA) at (-6.2,4) {$\QMA$};
\node[a] (QMAtwo) at (-6.2,5.5) {$\QMAtwo$};
\node[b] (entQPH) at (-3.3,4.9) {$\QEPH = \QRGone$};
\node[b] (PH) at (-1.3,1.8) {$\PH = \distributionPH$};
\node[b] (QCPH) at (-1.3,3.5) {$\QCPH = \distributionQCPH$};
\node[a] (prodQPH) at (-1.3,7) {$\QPH$};
\node[a] (pureQPH) at (-1.3,8) {$\pureQPH$};
\node[a] (PSPACE) at (2,6.5) {$\PSPACE$};
\node[a] (EXPH) at (2,8.5) {$\EXP^\PP$};
\node[a] (PPPPP) at (2,5.0) {$\PTIME^{\PP{^\PP}}$};
\node[a] (PPP) at (2,3.5) {$\PTIME^\PP$};
\end{scope}
\path[-{Stealth[length=6pt]},line width=.6pt,gray]
(NP) edge (QCMA)
(NP) edge (PH)
(PH) edge (QCPH)
(PH) edge (PPP)
(QCPH) edge (PPPPP)
(PPPPP) edge (PSPACE)
(PPP) edge (PPPPP)
(QCMA) edge (QMA)
(QMA) edge (QMAtwo)
(QMAtwo) edge (prodQPH)
(PSPACE) edge (EXPH)
(prodQPH) edge (pureQPH)
(pureQPH) edge (EXPH)
(QCMA) edge (QCPH)
(QMA) edge (entQPH)
(entQPH) edge (prodQPH)
(entQPH) edge (PSPACE)
;
\path[-{Stealth[length=6pt]},line width=1.2pt,purple]
(QCPH) edge (prodQPH)
;
 (BQP);
\end{tikzpicture}
    \caption{(Color) The quantum polynomial hierarchy landscape in light of our work. The containments and complexity classes shown in grey were previously known, and the containments and complexity classes in red are contributions of this work.}
\label{fig:hierarchy-relations}
\end{figure}

\subsection{Main Ideas}\label{ssec:techniques}

Let us begin by explaining $\QEPH$ on an intuitive level (see \cref{def:entqph} for a formal definition). 
$\QEPH$ can be thought of as a constant-round non-interactive game between two competing provers, Alice and Bob, who take turns sending quantum registers, i.e., collections of qubits, to a verifier. Alice and Bob are allowed to entangle their own quantum registers across turns.
The verifier then performs a polynomial-time quantum computation, measures a fixed output qubit in the computational basis, and, if the verifier sees $1$, they accept (Alice wins), and reject otherwise (Bob wins). 
$\QEPH$ contains the decision problems for which Alice always wins with high probability on yes-instances and Bob always wins with high probability on no-instances. 
We note that in this game the moves are \emph{public}, which means that Alice knows the state of the quantum registers sent by Bob and vice versa.
See \cref{remark:public} for further discussion of public vs. private moves in a quantum world.

To highlight the key technique in our proof that $\QEPH$ collapses (\cref{lem:ent-qph-collapse}), we explain how to simulate the third level of $\QEPH$, denoted by $\entQSigmai[3]$, inside of the second level $\entQSigmai[2]$. The proof for higher levels proceeds by induction.
As we will explain formally in \cref{sec:prelim-entqph}, a $\entQSigmai[i]$ protocol can be written as an optimization problem with a value equal to the probability the verifier accepts when both players use optimal strategies.
In particular, Alice selects proofs that maximize the probability of the verifier accepting, while Bob selects proofs to minimize that probability.
For $\entQSigmai[3]$, given a problem instance in which the verifier's action is encoded by an observable $R$, the corresponding optimization problem is
\begin{equation*}
    \max_{\rho_1 \in \mathbf{D}(\calX_1)} \min_{\sigma \in \mathbf{D}(\calY)} \max_{\rho_2 \in \calA}\, \tr\left( R \left( \rho_2 \otimes \sigma\right) \right),  
\end{equation*}
where $\mathbf{D}(\calH)$ denotes the set of density operators on the Hilbert space $\calH$
and $\calA \coloneqq \{ \rho \in \mathbf{D}(\calX_1 \otimes \calX_2) \mid \tr_{\calX_2}(\rho) = \rho_1\}$.
The restriction of the second maximization to the set $\calA$ is 
to enforce that Alice's second move is consistent with her first.

A straightforward analysis shows that when focusing on the inner two operators, a min-max theorem applies, allowing us to swap the ordering of the inner minimization and maximization. Then, because we allow entangled states, we can combine the two sequential maximization operators into one, leaving an optimization problem corresponding to a two-round protocol.
Notably, both the three-round and two-round protocols are over the same input and verifier, so the reduction does not increase the problem size or change the error parameters.

It is natural to ask why 
our technique does not also collapse $\PH$. In short, the above approach fails immediately, since, for one, our collapse theorem relies on the fact that Alice and Bob are choosing quantum proofs from compact and convex sets (see \cref{fact:density-matrices-compact-convex,fact:partial-trace-compact-convex}). 
In contrast, the set of classical strings is neither compact nor convex. 

To show that $\QEPH = \QRGone$, we build on a previous characterization of Gharibian et al.\ \cite{GSSSY22_qph} where they showed that the second level of the \emph{unentangled} quantum polynomial, denoted by $\QSigmai[2]$, equals $\QRGone$. 
We extend their result in \cref{prop:second-levels-equal} to show that $\entQSigmai[2] = \QSigmai[2]=\QRGone$, which yields our characterization that $\QEPH = \entQSigmai[2]=\QRGone$. 
$\entQSigmai[2] = \QSigmai[2]$ because, after two turns, each prover has only sent a single proof, so there's no distinction yet to be made between the entangled versus entangled hierarchies.

\paragraph{}
We now turn to the containment $\QCPH \subseteq \QPH$, which are both defined formally in \cref{subsec:prelim-qph}.
In $\QCPH$, the verifier receives classical proofs, whereas the proofs in $\QPH$ are unentangled quantum mixed states.
One na\"ive approach to simulating $\QCPH$ inside of $\QPH$---which does not work---is for the verifier to immediately measure the quantum proofs to get classical strings and then run the $\QCPH$ verification protocol. 
The reason this fails is that the dishonest prover (i.e., the player without a winning strategy) can cheat by sending a quantum state, rather than a classical proof. 
In more detail, while the honest prover has perfect knowledge of the quantum states sent by the dishonest prover, they do not know which particular classical strings the verifier will observe upon measurement, making it unclear what their response should be.
The definition of $\QCPH$ guarantees the correct player has an effective response conditioned on any particular proof sent from the other player, but this does not generally guarantee the correct player can succeed against a mixture of potential moves.
Unfortunately, the equilibrium point of a zero-sum game which allows for such mixed moves will generally be mixed, rather than pure.

To overcome this, we simulate the $i$-th level of $\QCPH$ in the $2ki$-th level of $\QPH$, for some constant $k$. 
We ask the provers to send $k$ copies of each of the proofs they would send in the $\QCPH$ protocol, which increases the number of turns by a factor of $2k$. 
Using the groups of $k$ proofs, we give a simple test to ensure that no player cheats, which works as follows. 
Measure each of the $k$ proofs in the standard basis. 
If the outcomes are all equal, then the test passes, and, otherwise, the test fails. 
We prove that this is enough to force the provers to send computational basis states with high probability. 

We remark that this bears some similarity to other protocols involving unentanglement. 
Harrow and Montanaro \cite{harrow2013testing} used unentanglement to force Merlin to send $k$-partite states, and, recently, Jeronimo and Wu \cite{jeronimo2023power} use unentanglement to force Merlin to send many copies of (approximately) the same quantum state. 
Both of these results fundamentally rely on the swap test, which tests for equality between two quantum states \cite{buhrman2001quantum}. 
In a similar fashion, we use unentanglement to force the provers to send standard basis states, i.e., classical strings. 
With that, we design a simulation of any $\QCPH$ protocol inside of $\QPH$. 

\paragraph{}

Finally, we discuss our proof that $\distributionQCPH = \QCPH$ (the same techniques will also show $\distributionPH = \PH$). 
In $\distributionQCPH$, the provers send probability distributions over polynomial-length classical proofs. 
For classical proofs of length $m$, these distributions can have a support of size exponential in $m$. 
The key lemma for this result says that the provers can send \emph{much simpler} distributions without changing the acceptance probability of the verifier too much. 
In particular, we prove that the distributions sent by the provers can be \emph{uniform} over $\poly(m)$ many classical proofs and, even with this simplification, the acceptance probability of the verifier will change by at most a small constant.  
This simplification lemma (\cref{lem:lipton}) generalizes a result due to Lipton and Young \cite{lipton1994simple} and Althöfer \cite{ALTHOFER1994339}, where they showed that such a simplification works in the special case of a one-turn two-player game. 
Our contribution is to generalize their result to the case where the two players alternate sending distributions any constant number of times.

With the simplification lemma, one can prove $\distributionQCPH \subseteq \QCPH$ as follows. 
To send a distribution in $\QCPH$, the provers send every classical string that is in the support of their distribution. By our simplification lemma, there are only a polynomial number of such strings, so all of them can be sent in a polynomially-sized classical proof.  
Then, since the simplified distributions are uniform, the verifier can randomly sample one of the strings uniformly at random.
The other direction $\distributionQCPH \supseteq \QCPH$ follows from the same techniques that prove $\QCPH \subseteq \QPH$.

\subsection{Related and Concurrent Work}\label{subsec:related-work}
Early efforts to define quantum hierarchies include \cite{yamakami2002quantum,gharibian2012hardness}.

We choose to use alternating $\exists$ and $\forall$ quantifiers to define $\QEPH$ (as was the case for $\QCPH$ and $\QPH$ in \cite{GSSSY22_qph}).
In addition to a quantifier definition, $\PH$ can be \emph{equivalently} defined in the oracle model via constant-height towers of the form $\NP^{\NP^{\NP^{\ldots}}}$.
The oracular definition gives rise to natural definitions of quantum polynomial hierarchies, some of which have been studied recently.
Vinkhuijzen \cite{V18_BQPHthesis} and Aaronson, Ingram, and Kretschmer \cite{AIK22_acrobatics} study the ``$\QMA$ hierarchy'', $\mathsf{QMAH}$, which consists of constant-depth towers of the form $\QMA^{\QMA^{\QMA^{\ldots}}}$.\footnote{Vinkhuijzen only allows recursive queries to $\QMA$, whereas Aaronson, Ingram, and Kretschmer allow recursive queries to ${\mathsf{PromiseQMA}}$.}  
\cite[Theorem 5]{V18_BQPHthesis} shows that $\mathsf{QMAH}$ is contained in the counting hierarchy $\CH$, while the best upperbounds for the quantifier-based hierarchies, $\QEPH$ and $\QPH$, are $\PSPACE$ and $\EXP^\PP$, respectively. 

The method of showing equivalence between the quantifier-based and oracle-based definitions of $\PH$ does not appear to carry over to $\QEPH$, $\QPH$, or even $\QCPH$.
This seems related to the inability to ``pull quantumness out of a quantum algorithm'' as we can for randomness from randomized algorithms \cite{AIK22_acrobatics} as well as a lack of study of quantum oracle machines.
We further discuss questions regarding $\QMAH$ vs. $\QEPH$  in \nameref{sec:openproblems}.

There are several quantum complexity classes that involve provers sending possibly entangled proofs to a quantum polynomial-time verifier. 
We do not attempt to survey them here, but, for convenience, we summarize quantum complexity classes involving entangled proofs (and their classical counterparts) in \cref{table:classes}.

Our work on $\distributionPH$ builds on previous game-theoretic characterizations in complexity theory (see e.g., \cite{feigenbaum1995game}).
$\PH$-style classes involve a debate with public communication (perfect information), and a non-interacting, passive referee.
$\RG$-style classes involve private communication (imperfect information) with provers sending particular strings to the referee (perfect recall).
A consequence of imperfect information is that the players must model their competitor's moves as probability distributions (mixed strategies) because they are never sure which move is made.
Our class $\distributionPH$ fits into this framework in a nuanced way.
Specifically, the distributions sent are public (similar to $\PH$); they represent a mixture of pure moves (similar to $\RG$); but, uniquely, the provers don't know which string will be sampled by the referee (reminiscent of imperfect recall). 
This is a novel game-theoretic model, and as we discuss further in \cref{sec:mixed}, it is naturally motivated by a game of quantum mixed states sent to a non-interacting referee.

Finally, the concurrent and independent work of Agarwal, Gharibian, Koppula, and Rudolph \cite{agarwal2023quantum} also studies generalizations of the polynomial hierarchy.
They prove $\QCPH \subseteq \pureQPH$, which is similar to our \cref{thm:direct-proof-ph-prodqph} that $\QCPH \subseteq \QPH$.
Since $\QPH\subseteq \pureQPH$ is straightforward (the provers send purifications of their proofs), our \cref{thm:direct-proof-ph-prodqph} implies $\QCPH\subseteq \pureQPH$.
In this sense, our containment is stronger.
However, their containment has the nice (and nontrivial) feature that the $k$-th level of $\QCPH$ is contained in the $k$-th level of $\pureQPH$, whereas our containment requires blowing up to the $ck$-th level of $\QPH$ for a constant integer $c$. 
Besides this,
Agarwal et al. contribute several more results including a theorem that if $\QCSigmai=\QCPii$ then $\QCPH$ collapses (see also \cite{falor2023collapsible}); a Karp-Lipton style result that $\QCMA \subseteq \class{BQP/mpoly}$ implies $\QCPH$ collapses; a new upper bound $\QPH\subseteq \pureQPH \subseteq \EXP^{\PP}$, improving on the previous upper bound of $\EXPH$; and a method for one-sided error-reduction of $\pureQPH$.

\begin{table}[ht]
    \centering
    \caption{Complexity classes characterizing proof verification that are related to $\QEPH$. ``C'' means classical and ``Q'' means quantum. For every class below, multiple provers are always competing, and, for multi-round quantum protocols, the quantum proofs can be entangled across rounds. Public means that the provers have full knowledge of their opponent's previous turns.}
    \label{table:classes}
    \begin{tabular}
    {
        l
        S[table-format = 2]
        S[table-format = 2]
        S[table-format = 2]
        S[table-format = 2]
        S[table-format = 2]
        S[table-format = 2]
        S[table-format = 2]
        }
        \toprule
              & {\# of } & {\# of}   & {}       & {}         & {Interaction}            & {Public or}      &  {} \\
        Class & {Rounds} & {Provers} & {Proofs} & {Verifier} & {from referee?} & {Private} & {Equals} \\
        \midrule
        {$\NP$} & {1} & {1} & {C} & {C} & {no} & {N/A} & {} \\
        {$\QMA$} & {1} & {1} & {Q} & {Q} & {no} & {N/A} & {} \\
        {$\IP$} & {poly} & {1} & {C} &{C} & {yes} & {N/A} & {$\PSPACE\,$ \cite{shamir1992ip}} \\
        {$\QIP\mathsf{(3)}$} & {3} & {1} & {Q} &{Q} & {yes} & {N/A} & {$\PSPACE\,$\cite{jain2011qip}} \\
        \midrule 
        {$\PH$} & {const} & {2} & {C} & {C} & {no} & {pub.} & {} \\
        {$\QEPH$} & {const} & {2} & {Q} & {Q} & {no} & {pub.} & {$\QRGone$ [This Work]} \\
        {$\RG(1)$} & {1} & {2} & {C} &{C} & {no} & {priv.} & {$\mathsf{S_2P}$\cite{ALTHOFER1994339,lipton1994simple}} \\
        {$\RG(2)$} & {2} & {2} & {C} &{C} & {yes} & {priv.} & {$\PSPACE\,$\cite{feige1997making}} \\
        {$\RG$} & {poly} & {2} & {C} &{C} & {yes} & {priv.} & {$\EXP\,$\cite{feige1997making}} \\
        {$\RG\mathsf{(pub)}$} & {poly} & {2} & {C} &{C} & {yes} & {pub.} & {$\PSPACE\,$\cite{feige1997making}} \\
        {$\QRG(1)$} & {1} & {2} & {Q} &{Q} & {no} & {priv.} &  \\
        {$\QRG(2)$} & {2} & {2} & {Q} &{Q} & {yes} & {priv.} & {$\PSPACE\,$\cite{gutoski2013parallel}} \\
        {$\QRG$} & {poly} & {2} & {Q} &{Q} & {yes} & {priv.} & {$\EXP\,$\cite{gutoski2007toward}} \\
        \bottomrule
    \end{tabular}
\end{table}

\section{Preliminaries}\label{sec:prelims}

We introduce notation, definitions, and background that are central to our results. 
For the most part, we assume familiarity with common concepts and classes in quantum and classical complexity theory as well as quantum computing and quantum information.
For a thorough discussion of these topics, see \cite{arora2009computational, watrous2018theory, kitaev2002classical, nielsen2002quantum}. 

We will need the following version of Hoeffding's inequality. 

\begin{fact}[Hoeffding’s inequality]\label{fact:hoeffding}
Let $X_1,\dots, X_n$ be independent random variables subject to $a_i \le X_i \le b_i$ for all $i$. 
Let $X = \sum_{i=1}^n X_i$ and let $\mu = \E[X]$. 
Then it holds that
\[
\Pr[X - \mu \geq t] \le \exp\left(- \frac{2t^2}{\sum_{i=1}^n (b_i - a_i)^2} \right)
\]
and 
\[
\Pr[X - \mu \leq -t] \le \exp\left(- \frac{2t^2}{\sum_{i=1}^n (b_i - a_i)^2} \right).
\]
\end{fact}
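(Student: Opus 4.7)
The plan is to follow the standard Chernoff-style proof of Hoeffding's inequality: exponentiate, apply Markov's inequality, use independence to factorize the moment generating function, bound each factor via Hoeffding's lemma, and then optimize the free parameter. Let $\mu_i = \E[X_i]$ and $Y_i = X_i - \mu_i$, so that $Y_i$ is centered and satisfies $a_i - \mu_i \le Y_i \le b_i - \mu_i$, a range of length $b_i - a_i$. For any $s > 0$, Markov's inequality applied to the nonnegative random variable $e^{s(X - \mu)}$ gives
\[
\Pr[X - \mu \ge t] \;=\; \Pr\!\left[e^{s(X-\mu)} \ge e^{st}\right] \;\le\; e^{-st}\, \E\!\left[e^{s(X-\mu)}\right] \;=\; e^{-st} \prod_{i=1}^{n} \E\!\left[e^{s Y_i}\right],
\]
where the last equality uses independence of the $X_i$ (hence of the $Y_i$).

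The main technical step is Hoeffding's lemma, which I would state and prove separately: if $Y$ is a random variable with $\E[Y] = 0$ and $\alpha \le Y \le \beta$ almost surely, then $\E[e^{sY}] \le \exp\!\bigl(s^2(\beta-\alpha)^2/8\bigr)$ for every $s \in \R$. The proof uses convexity of $y \mapsto e^{sy}$ to upper bound $e^{sY}$ by the chord $\tfrac{\beta - Y}{\beta - \alpha} e^{s\alpha} + \tfrac{Y - \alpha}{\beta - \alpha} e^{s\beta}$, taking expectations (using $\E[Y]=0$) to reduce to a one-variable function $\varphi(u) \coloneqq \log\!\bigl((1-p)e^{-up} + p e^{u(1-p)}\bigr)$ with $p = -\alpha/(\beta-\alpha)$ and $u = s(\beta-\alpha)$, and then showing $\varphi(u) \le u^2/8$ via a second-order Taylor expansion (noting $\varphi(0) = \varphi'(0) = 0$ and $\varphi''(u) \le 1/4$ by the AM--GM-type bound $p(1-p) \le 1/4$). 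This lemma is the only nontrivial ingredient, and its proof via the $\varphi''$ bound is the step I would expect to spend the most care on.

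Applying Hoeffding's lemma to each $Y_i$ yields $\E[e^{sY_i}] \le \exp\!\bigl(s^2(b_i - a_i)^2/8\bigr)$, so substituting back gives
\[
\Pr[X - \mu \ge t] \;\le\; \exp\!\left(-st + \frac{s^2}{8}\sum_{i=1}^{n}(b_i - a_i)^2\right).
\]
Optimizing over $s > 0$ by setting $s = 4t / \sum_i (b_i - a_i)^2$ produces the claimed bound $\exp\!\bigl(-2t^2 / \sum_i (b_i - a_i)^2\bigr)$. Finally, the lower-tail bound follows immediately by applying the upper-tail bound to the random variables $-X_1, \ldots, -X_n$, which are independent, satisfy $-b_i \le -X_i \le -a_i$ (same range length), and have sum $-X$ with mean $-\mu$. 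No step is conceptually hard once Hoeffding's lemma is in hand; the only mild pitfall is keeping track of the constants (in particular the factor of $1/8$ versus $1/2$, which is what ultimately produces the coefficient $2$ in the exponent after optimization).
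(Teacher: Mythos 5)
Your proof is the standard Chernoff--Hoeffding argument (Markov on $e^{s(X-\mu)}$, factorization by independence, Hoeffding's lemma with the $\varphi''\le 1/4$ bound, then optimizing $s=4t/\sum_i(b_i-a_i)^2$), and it is correct, including the constants. The paper states this as a known Fact and gives no proof of its own, so there is nothing to compare against; your write-up would serve as a complete proof.
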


\subsection{Quantum Information}
A \textit{quantum register} refers to a collection of qubits. Associated with each register is a complex Hilbert space, and the state of a quantum register is described by a Hermitian, positive semi-definite matrix with trace one called a \textit{density matrix}.  
We denote the set of $n$-qubit density matrices by $\mathbf{D}(n)$, and the sets of linear operators and density matrices on a complex Hilbert space $\calH$ by $\mathbf{L}(\calH)$ and $\mathbf{D}(\calH)$, respectively. 

For two quantum registers $(\qreg{X}, \qreg{Y})$ with Hilbert spaces $\calX$ and $\calY$, the combined space is the tensor product space $\calX \otimes \calY$.   
The partial trace $\tr_{\calY}: \mathbf{L}(\calX \otimes \calY) \to \mathbf{L}(\calX)$ is the unique linear map that satisfies $\tr_\calY(A \otimes B) = \tr(A) B$ for all $A \in \mathbf{L}(\calX)$ and $B \in \mathbf{L}(\calY)$. 
If the compound register $(\qreg{X}, \qreg{Y})$ is in the state $\rho \in \mathbf{D}(\calX \otimes \calY)$, then the state of register $\qreg{X}$ is $\tr_\calY(\rho) \in \mathbf{D}(\calX)$.
That is, operationally speaking, the partial trace is the act of ignoring (or discarding) a quantum register.
We note that the partial trace $\tr_\calX$ can be defined similarly, and, in general, the context in which the partial trace is used should clarify which spaces are being ``traced out''.  

A \textit{quantum measurement} of a quantum register is described by a finite collection of Hermitian, positive semi-definite matrices that sum to identity. Let $\qreg{X}$ be a quantum register with Hilbert space $\calX$ whose state is described by $\rho$. Let $\calM = \{E_i \mid i \in \Sigma\}$ be a quantum measurement, where $\Sigma$ is a finite alphabet. Upon measuring $\qreg{X}$ with $\calM$, we observe $i \in \Sigma$ with probability $\tr(E_i \rho)$. 

\subsection{A Min-Max Theorem}
To prove our collapse theorem, we use a weaker version of Sion's min-max theorem. 

\begin{theorem}[A weaker version of Sion's min-max theorem \cite{sion1958general}]\label{fact:min-max}
Let $\calX$ and $\calY$ be complex Euclidean spaces, let $\calA \subseteq \calX$ and $\calB \subseteq \calY$ be convex and compact subsets, and let $f: \calA \times \calB \to \reals$ be a bilinear function. Then 
\[
\max_{a \in \calA} \min_{b \in \calB} f(a, b) = \min_{b \in \calB} \max_{a \in \calA} f(a,b).
\]
\end{theorem}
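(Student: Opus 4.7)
The plan is to prove the two inequalities separately. The direction $\max_a \min_b f \le \min_b \max_a f$ is weak duality: bilinearity of $f$ on the finite-dimensional spaces $\calX, \calY$ makes $f$ continuous, and compactness of $\calA$ and $\calB$ ensures all the relevant extrema are attained. For any $(a_0, b_0) \in \calA \times \calB$, the pointwise chain
\[
    \min_{b \in \calB} f(a_0, b) \;\le\; f(a_0, b_0) \;\le\; \max_{a \in \calA} f(a, b_0)
\]
gives the inequality after maximizing over $a_0$ on the left and minimizing over $b_0$ on the right.

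The substantive direction is $\ge$. First I would set $\phi(a) := \min_{b \in \calB} f(a,b)$ and $\psi(b) := \max_{a \in \calA} f(a,b)$. Bilinearity makes $f(\cdot, b)$ affine for each fixed $b$, so $\phi$ is a pointwise minimum of affine functions and hence concave; dually, $\psi$ is convex. Both are continuous on compact sets, so $v^\star := \max_a \phi(a)$ and $v_\star := \min_b \psi(b)$ are attained. Suppose for contradiction $v^\star < v_\star$, and pick $v$ strictly between them. For each $b \in \calB$, let $A_b := \{a \in \calA : f(a,b) \ge v\}$; this set is convex (by affineness of $f(\cdot, b)$), closed, and nonempty (since $\psi(b) \ge v_\star > v$). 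A common point $a^\star \in \bigcap_{b \in \calB} A_b$ would satisfy $\phi(a^\star) \ge v > v^\star$, contradicting the definition of $v^\star$, so it suffices to produce such an $a^\star$.

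The hard part is exactly this: showing $\bigcap_{b \in \calB} A_b \neq \emptyset$. By compactness of $\calA$, this reduces to the finite intersection property, i.e., showing $\bigcap_{i=1}^n A_{b_i} \neq \emptyset$ for every finite collection $b_1, \ldots, b_n \in \calB$. This finite case is itself a bilinear minimax statement, for the payoff $(a, \lambda) \mapsto \sum_i \lambda_i f(a, b_i) = f\bigl(a, \sum_i \lambda_i b_i\bigr)$ on $\calA \times \Delta_n$. It can be obtained from von Neumann's classical minimax theorem (proved via LP duality or Brouwer's fixed point theorem), combined with convexity of $\calB$ to absorb the optimal mixed strategy $\sum_i \lambda_i^\star b_i$ back into a single $b \in \calB$. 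Alternatively, and much more efficiently, I would simply invoke Sion's minimax theorem as a black box: a bilinear $f$ on convex compact sets trivially satisfies the quasi-concavity/quasi-convexity and upper/lower semicontinuity hypotheses that Sion's theorem assumes, and the stated claim is its immediate specialization.
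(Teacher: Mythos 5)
Your proposal is correct, but it does substantially more work than the paper, which offers no proof of this statement at all: it is presented purely as a citation to Sion's theorem, exactly as in the final sentence of your proposal (a bilinear $f$ on convex compact sets trivially meets Sion's quasi-concavity/quasi-convexity and semicontinuity hypotheses, so the statement is an immediate specialization). Your self-contained argument---weak duality, then the level-set/finite-intersection-property reduction of the hard inequality to a minimax over $\calA \times \Delta_n$---is the standard skeleton of proofs of Sion-type theorems and is sound; the one place you are slightly loose is the appeal to ``von Neumann's classical minimax theorem'' for the game on $\calA \times \Delta_n$, since von Neumann's theorem is stated for a pair of simplices rather than a general convex compact $\calA$ against a simplex. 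That finite case is most cleanly closed by a separating-hyperplane argument: the set $\left\{ \bigl(f(a,b_1),\dots,f(a,b_n)\bigr) : a \in \calA \right\}$ is convex by linearity of $f(\cdot,b)$, and if it missed the orthant $\{x : x_i \ge v \ \forall i\}$ one could separate to produce $\lambda \in \Delta_n$ with $f\bigl(a, \sum_i \lambda_i b_i\bigr) < v$ for all $a$, contradicting $\min_{b\in\calB}\max_{a}f(a,b) > v$ via convexity of $\calB$. With that detail filled in, your elementary route is a complete proof; what it buys over the paper's approach is self-containedness, at the cost of length, whereas the paper's citation keeps the preliminaries minimal since the full strength of Sion's theorem is never needed.
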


It is a well-known fact that the space of density matrices is compact and convex.
\begin{fact}[{\cite[Chapter 1]{watrous2018theory}}]\label{fact:density-matrices-compact-convex}
Let $\mathbf{D}(\calH)$ be the set of density matrices on a complex Hilbert space $\calH$. $\mathbf{D}(\calH)$ is compact and convex.
\end{fact}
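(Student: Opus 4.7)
The plan is to verify the two properties separately, each by a direct unpacking of the definition of a density matrix as a Hermitian, positive semidefinite (PSD), trace-one operator on $\calH$. Throughout, I take $\calH$ to be finite-dimensional, which is the relevant setting for this paper (quantum registers of finitely many qubits); then $\mathbf{L}(\calH)$ is itself a finite-dimensional complex Hilbert space under the Hilbert–Schmidt inner product, so the usual topological machinery applies.

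For convexity, I would take two arbitrary $\rho_1, \rho_2 \in \mathbf{D}(\calH)$ and $\lambda \in [0,1]$, and check that $\rho \coloneqq \lambda \rho_1 + (1-\lambda)\rho_2$ satisfies each defining property. Hermiticity follows from $\rho^\dagger = \lambda \rho_1^\dagger + (1-\lambda)\rho_2^\dagger = \rho$. Positive semidefiniteness follows because for any $\ket{\psi} \in \calH$, $\bra{\psi}\rho\ket{\psi} = \lambda \bra{\psi}\rho_1\ket{\psi} + (1-\lambda)\bra{\psi}\rho_2\ket{\psi} \geq 0$ since each summand is nonnegative. The trace condition follows from linearity of the trace: $\tr(\rho) = \lambda \tr(\rho_1) + (1-\lambda)\tr(\rho_2) = 1$.

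For compactness, since $\mathbf{L}(\calH)$ is a finite-dimensional normed space it suffices to show $\mathbf{D}(\calH)$ is closed and bounded (Heine–Borel). Closedness I would establish by writing $\mathbf{D}(\calH)$ as the intersection of three closed sets: the Hermitian operators (the preimage of $\{0\}$ under the continuous map $A \mapsto A - A^\dagger$), the PSD operators (the intersection over all $\ket{\psi} \in \calH$ of the preimages of $[0,\infty)$ under the continuous linear functional $A \mapsto \bra{\psi}A\ket{\psi}$), and the trace-one operators (the preimage of $\{1\}$ under the continuous map $\tr$). Boundedness follows from the observation that any $\rho \in \mathbf{D}(\calH)$ has nonnegative eigenvalues summing to one, so every eigenvalue lies in $[0,1]$; hence $\|\rho\|_{\mathrm{op}} \leq 1$ and, equivalently, $\|\rho\|_{\mathrm{HS}} \leq 1$.

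There is no serious obstacle here — the statement is a standard textbook fact, and in a polished write-up one could simply cite \cite{watrous2018theory}. The only modest point of care is making the closedness argument rigorous: one must observe that PSD-ness can be expressed as an intersection of (uncountably many, but each closed) half-space conditions indexed by unit vectors, or equivalently as nonnegativity of the minimum eigenvalue, which is a continuous function of the operator. Either formulation suffices.
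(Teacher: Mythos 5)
Your proof is correct; the paper itself states this as a cited fact (Watrous, Chapter 1) without supplying a proof. Your argument---convexity by linearity of the defining conditions, and compactness via Heine--Borel with closedness expressed as an intersection of preimages of closed sets under the continuous maps $A \mapsto A - A^\dagger$, $A \mapsto \langle \psi | A | \psi\rangle$, and $\tr$---is essentially identical to the technique the paper uses to prove the neighboring \cref{fact:partie-trace-compact-convex}\footnote{That is, the partial-trace-constrained version, \cref{fact:partial-trace-compact-convex}.}, so nothing further is needed.
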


It is critical for us that, even if we impose partial trace constraints on the set of density matrices, the set remains compact and convex. We include a proof for completeness.

\begin{fact}\label{fact:partial-trace-compact-convex}
Let $\qreg{X}, \qreg{Y}$ be two quantum registers with Hilbert spaces $\calX$ and $\calY$, respectively, and let $\mathbf{D}(\calX \otimes \calY)$ be the corresponding set of density operators. Let $\rho^\prime \in \mathbf{D}(\calX)$ be some fixed density matrix.
Then the set 
\[
\mathbf{S} = \{ \rho \in \mathbf{D}(\calX \otimes \calY) \mid \tr_\calY(\rho) = \rho^\prime \}
\]
is compact and convex.
\end{fact}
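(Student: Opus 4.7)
The plan is to prove convexity and compactness separately, since both properties follow by combining the structure of $\mathbf{D}(\calX \otimes \calY)$ already given by \cref{fact:density-matrices-compact-convex} with elementary properties of the partial trace.

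For convexity, I would pick arbitrary $\rho_1, \rho_2 \in \mathbf{S}$ and $\lambda \in [0,1]$, and verify that the convex combination $\rho_\lambda := \lambda \rho_1 + (1-\lambda)\rho_2$ still lies in $\mathbf{S}$. The point $\rho_\lambda$ belongs to $\mathbf{D}(\calX \otimes \calY)$ because that set is convex by \cref{fact:density-matrices-compact-convex}. Moreover, the partial trace is a linear map, so
\[
\tr_{\calY}(\rho_\lambda) = \lambda \tr_{\calY}(\rho_1) + (1-\lambda)\tr_{\calY}(\rho_2) = \lambda \rho' + (1-\lambda)\rho' = \rho',
\]
which places $\rho_\lambda$ in $\mathbf{S}$.

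For compactness, my plan is to realize $\mathbf{S}$ as the intersection of two nice sets inside the finite-dimensional real vector space of Hermitian operators on $\calX \otimes \calY$. The first set is $\mathbf{D}(\calX \otimes \calY)$, which is compact by \cref{fact:density-matrices-compact-convex}. The second is the affine subspace $\calC := \{\rho \in \mathbf{L}(\calX \otimes \calY) \mid \tr_{\calY}(\rho) = \rho'\}$, which is closed because $\tr_{\calY}$ is continuous (as a linear map between finite-dimensional spaces) and $\{\rho'\}$ is closed. Since $\mathbf{S} = \mathbf{D}(\calX \otimes \calY) \cap \calC$ is a closed subset of a compact set, it is compact.

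I do not expect any serious obstacle here: the only thing to be slightly careful about is confirming that $\mathbf{S}$ is nonempty, so that the statement is non-vacuous; but this is witnessed, for example, by $\rho' \otimes \tau$ for any fixed $\tau \in \mathbf{D}(\calY)$, which gives $\tr_{\calY}(\rho' \otimes \tau) = \rho' \cdot \tr(\tau) = \rho'$. With convexity and compactness each established by the short arguments above, the proof is complete.
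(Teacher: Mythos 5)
Your proof is correct. The convexity argument is identical to the paper's: take a convex combination and apply linearity of the partial trace. For compactness, however, you take a cleaner route than the paper does. You write $\mathbf{S}$ as the intersection of $\mathbf{D}(\calX \otimes \calY)$ --- already known to be compact by \cref{fact:density-matrices-compact-convex} --- with the closed preimage $\tr_{\calY}^{-1}(\{\rho'\})$, and invoke the fact that a closed subset of a compact set is compact. The paper instead re-verifies everything from scratch: it argues boundedness directly (entries bounded in magnitude by $1$) and establishes closedness by intersecting the preimages of closed sets under several continuous maps encoding positive semidefiniteness, Hermiticity, unit trace, \emph{and} the partial-trace constraint. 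Your approach buys brevity and avoids duplicating the content of \cref{fact:density-matrices-compact-convex}; the paper's buys self-containedness, spelling out why the density-operator conditions themselves cut out a closed set. Your closing remark on non-emptiness (witnessed by $\rho' \otimes \tau$) is a nice sanity check but is not needed for the statement, since the empty set is vacuously compact and convex.
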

\begin{proof}
   Let $\rho_1, \rho_2 \in \mathbf{S}$, and define $\sigma \coloneqq \theta \rho_1 + (1-\theta) \rho_2$ for for $\theta \in [0,1]$. Then  
   \begin{align*}
       \tr_\calY(\sigma) 
       &= \tr_\calY(\theta \rho_1 + (1-\theta) \rho_2) \\
       &= \theta \tr_\calY(\rho_1) + (1-\theta) \tr_\calY(\rho_2) && (\text{By the linearity of the partial trace.}) \\
       &= \theta \rho^\prime + (1-\theta) \rho^\prime && (\text{Because $\rho_1, \rho_2 \in \mathbf{S}$.}) \\
       &=  \rho^\prime,  
   \end{align*}
   so $\mathbf{S}$ is convex. 
   
To show that $\mathbf{S}$ is compact, we must show that it is closed and bounded.
Without loss of generality, let $\qreg{X}$ be an $n$-qubit register and $\qreg{Y}$ be an $m$-qubit register.
Then we can identify $\mathbf{S}$ with the vector space $\complex^{4^{n + m}}$ and observe that all entries are bounded in magnitude by $1$. Therefore, $\mathbf{S}$ is bounded.
To see that $\mathbf{S}$ is closed, we need the following definitions.
For $x \in \complex$, define $f_x: \complex^{4^{n+m}} \to \complex$ as 
$f_x(A) = \langle x, Ax \rangle$,
which is continuous because the inner product is continuous; 
define 
$g:\C^{4^{n + m}}\to\mathbb{C}^{4^{n +m}}$ as 
$g(A) = A - A^\dagger,$
which is a polynomial and therefore continuous;
and, finally, define $h:\C^{4^{n+m}}\to\mathbb{C}^{4^n}$ as 
$h(A) = \tr_\calY(A),$ which is a linear map on a finite-dimensional vector space and therefore continuous.
Then 
\[
\mathbf{S} = \bigcap_{x\in \mathbb{C}}f_x^{-1}([0,\infty))\cap g^{-1}(\{0\})\cap h^{-1}(\{\rho^\prime\})\cap\text{tr}^{-1}(\{1\}).
\]
The preimage of a continuous function on a closed set is closed, and the intersection of closed sets is closed. Therefore, $\mathbf{S}$ is closed. 
\end{proof}

\subsection{Previously Studied Hierarchies}\label{subsec:prelim-qph}

Here, we give formal definitions of the polynomial hierarchy $\PH$, 
the quantum-classical polynomial hierarchy $\QCPH$,
and the unentangled quantum polynomial hierarchy $\QPH$, the latter two of which were both introduced by Gharibian et al. \cite{GSSSY22_qph}.
These classes will appear again in \cref{sec:direct-proof-ph-prodqph} when we prove $\QCPH\subseteq \QPH$ and in \cref{sec:mixed} when we prove $\distributionQCPH= \QCPH$.
We defer definitions of our new classes until later, with $\QEPH$ studied in \cref{sec:entQPHcollapseInPSPACE} and $\distributionQCPH$ in \cref{sec:mixed}.

\begin{definition}[$\Sigmai$]
    A language $L$ is in the $i$-th level of the polynomial hierarchy $\Sigmai$ if there exists a polynomial-time deterministic Turing Machine $M$ such that for any $n$-bit input $x$,
    \begin{align*}
        x\in L &\iff  \exists y_1 \forall y_2 \exists y_3 \dots Q_i y_i \text{ such that } M(x,y_1,\dots,y_i)=1 , \\
        x\not\in L &\iff \forall y_1 \exists y_2 \forall y_3 \dots \overline{Q_i} y_i \text{ such that } M(x,y_1,\dots,y_i)=0 ,
    \end{align*}
    where $Q_i$ denotes $\exists$ if $i$ is odd and $\forall$ otherwise, $\overline{Q_i}$ denotes the complement of $Q_i$, and $\abs{y_i} \leq p(n)$ for some fixed polynomial $p$ for all $i$.
\end{definition}

\begin{definition}[The polynomial hierarchy ($\PH$) \cite{stockmeyer1976polynomial}] The Polynomial-time Hierarchy is defined as 
    \[\PH \coloneqq \bigcup_{i=0}^{\infty} \Sigmai.\]
\end{definition}

Note the union which defines $\PH$ is over values of $i$ which are constant, independent of a problem's input size.
Observe also that for all $i$, $\Sigmai[i]\subseteq \Sigmai[i+1]$. Additionally, $\PH$ is closed under complement, in particular because $\class{\overline{\Sigma}_i^p} \subseteq \Sigmai[i+1] \subseteq \PH$. 
The complement of $\class{\overline{\Sigma}_i^p}$ is defined to be $\Pii[i]$, and for all $i$ we have $\Sigmai[i]\subseteq \Pii[i+1] \subseteq \Pii[i+2]$.

The definition of $\PH$ is particularly robust.
The class can be defined equivalently by $\Sigmai[i+1] = \NP^{\Sigmai[i]}$, giving a constant-height tower of $\NP$ oracles.
The model of alternating nondeterministic Turing Machines also can be used to define each level of the hierarchy.
In another direction, the Sipser–Lautemann theorem shows $\BPP \subseteq \Sigmai[2]\cap \Pii[2] \subseteq \PH$ \cite{sipser1983complexity,LAUTEMANN1983215}.
So, natural bounded-error or probabilistic definitions of $\PH$ collapse to the standard, deterministic definition given above. This is also true for oracle definitions, where we know $\MA^{\MA^{\ldots}} = \PH$.

Even a partial survey of results regarding $\PH$ would be impossible to fit here. We finally note that $\Sigmai[i]=\Sigmai[i+1]$ or $\Sigmai[i]=\Pii[i]$ would both ``collapse'' the hierarchy so that $\PH=\Sigmai[i]$. These two events are analogous to $\PTIME=\NP$ or $\NP=\coNP$. Conversely, if $\PH$ collapses to any finite level, it implies analogs of $\PTIME=\NP$ and $\NP=\coNP$ must be true for some degree of nondeterminism, at some level of the hierarchy. So, the strongly-believed conjecture that $\PH$ is not equal to any $\Sigmai[i]$ for fixed $i$ is a generalization of those other strongly-believed conjectures.

The uniform circuit model is standard for quantum complexity classes, so we give the definition below.

\begin{definition}[Polynomial-time uniform family of quantum circuits]
A polynomial-time uniform family of quantum circuits is a family $\{V_n\}_{n \in \N}$ such that there exists a polynomial bounded function $t: \N \to \N$ and a deterministic Turing machine $M$ acting as follows. 
For every $n$-bit input $x$, $M$ outputs in time $t(n)$ a description of a quantum circuit $V_n$, which has a designated output qubit. 
We say $V_n$ accepts when we observe a $1$ upon measuring the designated output qubit in the standard basis. 
\end{definition}

We generally leave the subscript implicit and just write $V$.
Additionally, we often consider a single problem instance defined by an input $x$ for the full length of an analysis.
So instead of writing $V(x,y)$ for input $x$ and proof $y$,
we simply refer to $V(y)$.

As with most quantum complexity classes, we will be working with promise problems. 
Briefly, a promise problem $A$ is a pair of non-intersecting subsets ($A_\yes$, $A_\no$) of $\{0,1\}^*$. A decision problem, or language, is a promise problem where $A_\yes\cup A_\no = \{0,1\}^*$.

We are now ready to define $\QCPH$.

\begin{definition}[$\QCSigmai$ \cite{GSSSY22_qph}] \label{def:qcsigmai}
    A promise problem $L=(L_\yes, L_\no)$ is in $i$-th level of the quantum-classical polynomial hierarchy $\QCSigmai[i]{}\class{(c,s)}$ for polynomial-time computable functions $c, s: \N \to [0,1]$ if there exists a polynomial-time uniform family of quantum circuits $\left\{V_n \right\}_{n \in \N}$ such that for every $n$-bit input $x$, $V_n$ takes in proofs $y_1, \dots, y_i \subseteq \{0,1\}^{m(n)}$ for fixed polynomial $m$ and measures a fixed output qubit to decide to accept or reject, such that
    \begin{itemize}
        \item Completeness: $ x \in L_\yes \Rightarrow  \exists y_1 \forall y_2 \exists y_3 \dots Q_i y_i \text{ such that } \Pr\left[ V( y_1, \dots, y_i) \text{ accepts}\right] \geq c $,
        \item Soundness: $ x\in L_\no \Rightarrow  \forall y_1 \exists y_2  \forall y_3 \dots \overline{Q_i} y_i  \text{ such that } \Pr\left[ V(y_1, \dots, y_i) \text{ accepts}\right] \leq s $,
    \end{itemize}
   where $Q_i$ denotes $\exists$ if $i$ is odd and $\forall$ otherwise, $\overline{Q_i}$ denotes the complement of $Q_i$, and, for all $i$, $\abs{y_i} \leq p(n)$ for a fixed polynomially bounded function $p$. 
   When the completeness and soundness parameters $c, s$ are not specified, define
    \[
        \QCSigmai[i] \coloneqq \bigcup_{c-s \in \Omega\left(1/\poly(n)\right)} \QCSigmai[i]\class{(c,s)}.
    \]
\end{definition}
\begin{definition}[The quantum-classical polynomial hierarchy ($\QCPH$) {\cite{GSSSY22_qph}}]
The quantum-classical polynomial hierarchy is defined as 
\[
\QCPH \coloneqq \bigcup_{i=0}^\infty \QCSigmai.
\]
\end{definition}

Observe that $\QCSigmai[0] = \BQP$ and $\QCSigmai[1] = \QCMA$. Gharibian et al.\ \cite{GSSSY22_qph} proved that $\QCPH$ is contained in $\PTIME^{\PP^\PP}$, the second level of the counting hierarchy $\CH$. 

The definition of $\QCSigmai[i]$ to generically include $\QCSigmai[i]\class{(c,s)}$ for all $c-s\geq 1/\poly(n)$ is justified in part by the result of \cite{GSSSY22_qph} that for any such $c$ and $s$, we may reduce the error such that for any polynomially bounded function $r$, we have
$\QCSigmai[i]\class{(c,s)} = \QCSigmai[i]\class{(1-2^{-r},2^{-r})}$.

The unentangled quantum polynomial hierarchy $\QPH$ is defined similarly. The only difference is that the classical proofs are replaced by unentangled quantum proofs. 

\begin{definition}[$\QSigmai$ {\cite{GSSSY22_qph}}] \label{def:prodqsigmai}
    A promise problem $L=(L_\yes, L_\no)$ is in the $i$-th level of the unentangled quantum polynomial hierarchy $\QSigmai[i]\class{(c,s)}$ for polynomial-time computable functions $c, s: \N \to [0,1]$ if there exists a polynomial-time uniform family of quantum circuits $\left\{V_n \right\}_{n \in \N}$ such that for every $n$-bit input $x$, $V_n$ takes in quantum proofs $\rho_1, \dots, \rho_i$ and measures a fixed output qubit to decide to accept or reject, such that
    \begin{itemize}
        \item Completeness: $ x \in L_\yes \Rightarrow  \exists \rho_1 \forall \rho_2 \exists \rho_3 \dots Q_i \rho_i \text{ such that } \Pr\left[ V(\rho_1, \dots, \rho_i) \text{ accepts}\right] \geq c $,
        \item Soundness: $ x\in L_\no \Rightarrow  \forall \rho_1 \exists \rho_2  \forall \rho_3 \dots \overline{Q_i} \rho_i  \text{ such that } \Pr\left[ V(\rho_{1},\dots, \rho_i) \text{ accepts}\right] \leq s $,
    \end{itemize}
   where $Q_i$ denotes $\exists$ if $i$ is odd and $\forall$ otherwise, $\overline{Q_i}$ denotes the complement of $Q_i$, and, for all $i$, $\rho_i$ is a $p(n)$-qubit state for a fixed polynomially bounded function $p$. 
   When the completeness and soundness parameters $c, s$ are not specified, define
    \[
        \QSigmai[i] \coloneqq \bigcup_{c-s \in \Omega(1)} \QSigmai[i]\class{(c,s)}.
    \]
\end{definition}
\begin{definition}[$\QPH$ {\cite{GSSSY22_qph}}]
The unentangled quantum polynomial hierarchy is defined as 
\[
\QPH \coloneqq \bigcup_{i=0}^\infty \QSigmai.
\]
\end{definition}

\noindent Interestingly, $\QMAtwo \subseteq \QSigmai[3]$, since the verifier can simply ignore the second proof. 

Here, we let $\QSigmai[i] = \QSigmai[i]\class{(c,s)}$ for $c-s \geq \Omega(1)$, rather than $1/\poly(n)$, because we do not currently have an error reduction result for $\QPH$ similar to the one known for $\QCPH$ (although, \cite{agarwal2023quantum} recently made progress in this direction). 

\section{The Entangled Quantum Polynomial Hierarchy}\label{sec:prelim-entqph}

We formally define the entangled quantum polynomial hierarchy. 
The definition appears more technical than for $\QCPH$ and $\QPH$, but this is mostly just an issue of notation.

\begin{definition}[$i$-th level of the entangled quantum polynomial hierarchy ($\entQSigmai $)] \label{def:entqsigmai}
    A promise problem $L=(L_\yes, L_\no)$ is in $\entQSigmai(c,s)$ for polynomial-time computable functions $c, s: \N \to [0,1]$ if there exists a polynomial-time uniform family of quantum circuits $\left\{V_n \right\}_{n \in \N}$ such that for every $n$-bit input $x$, $V_n$ takes quantum proofs, measures a fixed output qubit to decide to accept or reject, and satisfies 
    \begin{itemize}
        \item Completeness: $ x \in L_\yes \Rightarrow  \exists \rho_1 \forall \rho_2 \exists \rho_3 \dots Q_i \rho_i \text{ such that } \Pr\left[ V(\rho_{i-1}, \rho_i) \text{ accepts}\right] \geq c $,
        \item Soundness: $ x\in L_\no \Rightarrow  \forall \rho_1 \exists \rho_2  \forall \rho_3 \dots \overline{Q_i} \rho_i  \text{ such that } \Pr\left[ V(\rho_{i-1}, \rho_i) \text{ accepts}\right] \leq s $,
    \end{itemize}
    where each $\rho_j$ is chosen from the set  
    \begin{equation*}
    \calA_j \coloneqq \begin{cases}
        \quad\left\{ \rho \in \mathbf{D}\left(\calX_1\otimes \calX_3 \otimes \dots \otimes \calX_{j}\right) \mid \text{if }j>1,\; \tr_{\calX_j}\left(\rho\right) = \rho_{j-2} \right\} & \text{if $j$ is odd}  \\
        \quad\left\{ \rho \in \mathbf{D}\left( \calX_2\otimes \calX_4 \otimes \dots \otimes \calX_{j}\right) \mid  \text{if }j>2,\; \tr_{\calX_i}\left(\rho\right) = \rho_{j-2} \right\} & \text{if $j$ is even}
    \end{cases} .
    \end{equation*}
    Here, $Q_i$ denotes $\exists$ if $i$ is odd and $\forall$ otherwise, and $\overline{Q_i}$ denotes the complement of $Q_i$. For all $i$, the corresponding Hilbert space $\calX_i$ is a space of at most $p(n)$ qubits for a fixed polynomial $p$.
    When the completeness/soundness parameters are not specified, define
    \[
        \entQSigmai[i] \coloneqq \bigcup_{c-s \in \Omega\left(1/\poly(n)\right)} \entQSigmai[i](c,s).
    \]
\end{definition}

\begin{definition}[The entangled quantum polynomial hierarchy ($\QEPH$)]\label{def:entqph}
    The entangled quantum polynomial hierarchy is defined as 
    \[\QEPH = \bigcup_{i = 0}^{\infty} \entQSigmai[i].\]
\end{definition}

When introducing a complexity class, perhaps the first question one should ask is whether or not the choice of completeness and soundness parameters actually matter. 
In \cite[Theorem 2.6]{GSSSY22_qph}, it was shown that $\QCPH$ is robust to the choice of error parameters, but no such result is known for $\QPH$.
In \cref{sec:entQPHcollapseInPSPACE}, we show that the choice of parameters does not matter for any level of $\QEPH$, i.e., for $c, s$ such that $c-s \geq 1/\poly(n)$, $\entQSigmai(c,s) = \entQSigmai(\frac{2}{3}, \frac{1}{3})$ for all $i \in \N$ (see \cref{thm:entqsigmaerror}).

Let us also make several remarks on our definition. 
As for $\PH$, the indices $i$ in the definition of $\QEPH$ are constants, independent of a problem's input size, and, as one should expect, $\BQP = \entQSigmai[0]$ and $\QMA=\entQSigmai[1]$. 
One can also define $\entQPii[i] \coloneqq \class{\overline{\Sigma}_i^{qe}}$ and $\entQDeltai[i] \coloneqq \entQSigmai \cap \entQPii$. 
The players also have no incentive to entangle their moves with their opponent because $\entQSigmai$ can be modeled as a zero-sum game. 
Therefore, we may assume the even and odd indexed states are unentangled.

Informally, $\entQSigmai$ can be thought of as the following game, where we assume $i$ is even to simplify the exposition.
Alice has (possibly entangled) quantum registers $(\qreg{A_1}, \ldots, \qreg{A_{i/2}})$,
and Bob has (possibly entangled) quantum registers $(\qreg{B_1}, \ldots, \qreg{B_{i/2}})$,
where each register is a number of qubits that is polynomial in the input size. 
The game commences as follows. In the first round, Alice reveals the state $\rho_1$ of $\qreg{A_1}$, and then Bob reveals the state $\sigma_1$ of $\qreg{B_1}$. 
In the second round, Alice reveals the state $\rho_2$ of $(\qreg{A_1}, \qreg{A_2})$, and Bob reveals the state $\sigma_2$ of $(\qreg{B_1}, \qreg{B}_2)$. 
To ensure Alice and Bob do not change their ``moves'' from previous rounds, we demand that $\tr_{\qreg{A_2}}(\rho_2) = \rho_1$ and $\tr_{\qreg{B_2}}(\sigma_2) = \sigma_1$. That is, Alice and Bob cannot modify the state of subsystems that have been revealed in previous rounds. 
In general, for the $i$-th round, it must be that 
$\tr_{\qreg{A_i}}(\rho_i) = \rho_{i-1}$ and 
$\tr_{\qreg{A_i}}(\sigma_i) = \sigma_{i-1}$.
The game continues like this until the global states of $(\qreg{A_1}, \ldots, \qreg{A_{i/2}})$  and $(\qreg{B_1}, \ldots, \qreg{B_{i/2}})$ are known to both players and the referee. 

At this point, the referee must accept or reject. 
The referee's action is determined by a polynomial-time quantum circuit and a single-qubit measurement. This action can be equivalently expressed as a two-outcome quantum measurement $\{R, I-R\}$, where the first observable corresponds to accepting. Then, the probability the referee accepts is equal to $\tr\left(R \left(\rho_{i/2} \otimes \sigma_{i/2} \right) \right)$. 
We emphasize that we do not intend to actually write the observable $R$ corresponding to some verification circuit $V$. Rather, the observable $R$ is a convenient way to express the action of the referee. 

Alice's goal is to maximize the acceptance probability, and Bob's goal is to minimize the acceptance probability. 
Therefore, given an instance of an $\entQSigmai[i]$ problem with corresponding observable $R$, 
we can express the acceptance probability achieved by both players playing optimal strategies as
\begin{equation}\label{eqn:minmaxdef}
    \upsilon = \max_{\rho_1 \in \calA_1} \min_{\sigma_1 \in \calB_1} \ldots \max_{\rho_{i/2} \in \calA_{i/2}} \min_{\sigma_1 \in \calB_{i/2}} \tr\left(R \left(\rho_{i/2} \otimes \sigma_{i/2}\right) \right),   
\end{equation}
where $\calA_i$ and $\calB_i$ are defined as in \cref{def:entqsigmai}, and each alternating max/min operator corresponds to an alternation of quantifiers in \cref{def:entqsigmai}.
In this work, we intend to use \cref{eqn:minmaxdef} as a tool for proving the equality of one game/problem instance to another.

Finally, as an application of \cref{eqn:minmaxdef}, we observe that the second levels of both $\QEPH$ and $\QPH$ are equal to $\QRGone$, which is known to be contained in $\PSPACE$.

\begin{proposition}[Extension of {\cite[Corollary 1.9]{GSSSY22_qph}}]\label{prop:second-levels-equal}
    \[
    \entQSigmai[2] = \entQPii[2] = \QSigmai[2] = \QPii[2] = \QRGone \subseteq \PSPACE.
    \]
\end{proposition}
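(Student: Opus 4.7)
The plan is to establish the five claimed equalities by leveraging (i) the fact that at the second level each player has only had a single turn, so the "entangled vs.\ unentangled" distinction is vacuous, (ii) Sion's min--max theorem to flip the two remaining quantifiers, and (iii) the prior characterization of $\QSigmai[2]$ from \cite{GSSSY22_qph}.

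First I would unpack \cref{def:entqsigmai} at $i=2$: Alice picks $\rho_1 \in \mathbf{D}(\calX_1)$, then Bob picks $\rho_2 \in \mathbf{D}(\calX_2)$. The sets $\calA_1$ and $\calA_2$ carry no partial-trace constraint, since $j>1$ and $j>2$ respectively are required for the constraint to kick in. So $\calA_1 = \mathbf{D}(\calX_1)$ and $\calA_2 = \mathbf{D}(\calX_2)$, which is exactly the input set used in \cref{def:prodqsigmai} for $\QSigmai[2]$. Since the verifier is the same object in both definitions and the optimization ranges match, $\entQSigmai[2] = \QSigmai[2]$ follows immediately; complementing inputs gives $\entQPii[2] = \QPii[2]$.

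Next I would show $\QSigmai[2] = \QPii[2]$ via \cref{fact:min-max}. Using the representation in \cref{eqn:minmaxdef}, a $\QSigmai[2]$ instance has value
\[
\upsilon_{\Sigma} \;=\; \max_{\rho_1 \in \mathbf{D}(\calX_1)} \min_{\rho_2 \in \mathbf{D}(\calX_2)} \tr\bigl(R(\rho_1 \otimes \rho_2)\bigr),
\]
and a $\QPii[2]$ instance has value $\upsilon_{\Pi}$ with the order of $\max$ and $\min$ swapped. The function $(\rho_1,\rho_2)\mapsto \tr\bigl(R(\rho_1\otimes \rho_2)\bigr)$ is bilinear, and by \cref{fact:density-matrices-compact-convex} both $\mathbf{D}(\calX_1)$ and $\mathbf{D}(\calX_2)$ are compact and convex, so Sion's theorem yields $\upsilon_\Sigma=\upsilon_\Pi$. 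This equality of optimal values for every instance $R$ translates into the class equality $\QSigmai[2]=\QPii[2]$ by standard error-parameter choices (using, e.g., $c=2/3,s=1/3$, which we justify via the error-reduction result to be proved in \cref{sec:entQPHcollapseInPSPACE}).

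Finally, I would invoke Gharibian et al.\ \cite[Corollary 1.9]{GSSSY22_qph}, which already states $\QSigmai[2] = \QRGone$, and then quote the Jain--Watrous upper bound $\QRGone \subseteq \PSPACE$ \cite{jain2009parallel}. Chaining these equalities yields the full claim. I expect no genuine obstacle here: the only mildly subtle step is verifying the hypotheses of Sion's theorem (compactness and convexity of the density-matrix sets, bilinearity of the trace objective), which are all standard and have already been collected as facts earlier in the excerpt. The rest of the proposition is essentially a bookkeeping exercise comparing definitions.
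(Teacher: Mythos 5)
Your proposal is correct and follows essentially the same route as the paper's proof: identify the second-level game value with the $\QRGone$ value expression from \cref{eqn:minmaxdef}, observe that entanglement is vacuous when each player has moved only once (so the entangled and unentangled second levels coincide), use a min--max theorem to get closure under complement, and cite \cite{GSSSY22_qph} and Jain--Watrous for $\QSigmai[2]=\QRGone\subseteq\PSPACE$. The only cosmetic difference is that you apply Sion's theorem directly to the bilinear trace objective to get $\QSigmai[2]=\QPii[2]$, whereas the paper phrases the same step as ``$\QRGone$ is closed under complement by a min--max theorem''; note also that once the min--max equality of values is in hand you do not actually need the error-reduction result you defer to, so no circularity arises.
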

\begin{proof}
    In \cite{GSSSY22_qph}, it was observed that $\QSigmai[2] = \QRGone$.
    Here, we use the same reasoning to conclude
    \[
        \entQSigmai[2] = \entQPii[2] = \QSigmai[2] = \QPii[2] = \QRGone .
    \]
    The equivalence is clear given that the value of a $\QRGone$ protocol is described by an expression identical to \cref{eqn:minmaxdef} when $i=2$, which corresponds to $\entQSigmai[2]$ (see \cite{jain2009parallel} for a formal definition of $\QRGone$).
    Then, note that $\QRGone$ is closed under complement, by a min-max theorem, implying $\entQSigmai[2]=\entQPii[2]$.
    Second, because entanglement is not a concern until one of the players makes multiple moves, the second levels of the entangled and unentangled hierarchies are equal (similarly, the first levels are equal to each other, as are the zeroeth levels).
    Finally, the containment of $\QRGone$ in $\PSPACE$ is due to \cite[Proposition 4]{jain2009parallel}.
\end{proof}

The fact that $\entQSigmai[2] = \entQPii[2] = \QSigmai[2] = \QPii[2]$ is somewhat striking, since such an equality in the classical setting would imply a collapse of $\PH$ \cite[Theorem 5.6]{arora2009computational}.\footnote{This phenomenon of the second levels being equal is also true for $\mathsf{TFPH}$, the hierarchy generalizing the class $\mathsf{TFNP}$ \cite{kleinberg_et_al:LIPIcs.ITCS.2021.44}.} 

\begin{remark}[Public vs. private quantum proofs]\label{remark:public}
While the quantum polynomial hierarchies are well-defined, some may object that the classes are unphysical because the provers have full knowledge of each other's density matrices, 
even though the verifier only receives a single copy of each proof. 
The quantum no-cloning theorem also begs the question of how exactly the information is communicated between the provers. 
This is not an issue for $\PH$ because it's trivial to learn classical proofs given a single copy, and, for $\QRG$, this is not an issue because communication is private. 
In every quantum complexity class that we are aware of, the provers are ``all-powerful'' yet still bound by the laws of quantum mechanics.

Despite being unphysical, we are content with the definition for two reasons. First, it is a well-defined, useful theoretical tool for studying quantum information. 
Second, in the case of $\QEPH$, we show that it collapses to $\entQSigmai[2]$, where it is known, by a min-max theorem, that public vs. private communication is irrelevant. So, despite starting with an unphysical definition, we show equivalence with a class that adheres entirely to the laws of quantum mechanics. 
\end{remark}

\section{The Entangled Quantum Polynomial Hierarchy Collapses}\label{sec:entQPHcollapseInPSPACE}

We prove several results about the entangled quantum polynomial hierarchy. 
Specifically, we prove that $\QEPH$ collapses to its second level, is equal to $\QRGone$, and that every level of $\QEPH$ is robust to the choice of completeness and soundness parameters (i.e., for $c, s$ such that $c-s \geq 1/\poly(n)$, $\entQSigmai(c,s) = \entQSigmai(\frac{2}{3}, \frac{1}{3})$ for all $i \in \N$).
We begin by proving that the hierarchy collapses.

\begin{lemma}\label{lem:ent-qph-collapse}
 For all constants $i \geq 2$, $\entQSigmai[2] = \entQSigmai[i]$.
\end{lemma}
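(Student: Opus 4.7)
The plan is to proceed by induction on $i$, establishing $\entQSigmai[i] \subseteq \entQSigmai[i-1]$ for every $i \geq 3$. The reverse containment $\entQSigmai[i-1] \subseteq \entQSigmai[i]$ is standard — pad with a dummy final move that the verifier ignores — and the base case $i=2$ is immediate, so chaining the inductive step yields $\entQSigmai[i]=\entQSigmai[2]$ for every constant $i \geq 2$.

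Fix a promise problem in $\entQSigmai[i](c,s)$ with verifier $V$ and accepting observable $R$. By \cref{eqn:minmaxdef}, the optimal acceptance probability on any input equals a nested sequence of $i$ alternating $\max/\min$ operators over the sets $\calA_j$ of \cref{def:entqsigmai}, applied to the bilinear function $\tr(R(\rho_{i-1}\otimes\rho_i))$; each $\calA_j$ is compact and convex by \cref{fact:density-matrices-compact-convex,fact:partial-trace-compact-convex}. To shrink this expression by one operator I use two micro-steps. \emph{First (Sion swap):} holding all outer variables fixed, the two innermost operators form an adjacent $\max$-$\min$ (if $i$ is even) or $\min$-$\max$ (if $i$ is odd) pair on compact convex sets with a bilinear objective, so \cref{fact:min-max} permits exchanging them. \emph{Second (collapse):} after the swap, the innermost and third-innermost operators are of the same type and adjacent; I collapse such a pair via
\[
\max_{\rho_j\in\calA_j}\ \max_{\rho_{j+2}\in\calA_{j+2}} g(\rho_{j+2}) \;=\; \max_{\rho_{j+2}\in\calA'_{j+2}} g(\rho_{j+2}),
\]
where $\calA'_{j+2}$ is the set of density operators on the full combined tensor space $\calX_1\otimes\calX_3\otimes\cdots\otimes\calX_{j+2}$ (or the even analog) subject only to the surviving partial-trace constraint $\tr_{\calX_j,\calX_{j+2}}(\rho_{j+2})=\rho_{j-2}$, obtained from the two original constraints via the chain rule for partial traces. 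The analogous identity holds for two adjacent $\min$'s. This collapse is valid because every $\rho_{j+2}\in\calA'_{j+2}$ canonically induces $\rho_j:=\tr_{\calX_{j+2}}(\rho_{j+2})\in\calA_j$, while every compatible pair from the original sets lies in $\calA'_{j+2}$; by \cref{fact:partial-trace-compact-convex}, $\calA'_{j+2}$ is itself compact and convex. The result is a value expression with $i-1$ alternating operators, matching \cref{eqn:minmaxdef} for an $\entQSigmai[i-1]$ protocol that uses the same verifier $V$, albeit with enlarged proof registers, and preserves $(c,s)$.

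The main technical obstacle is verifying the collapse step rigorously: the partial-trace consistency constraints must be tracked carefully through the chain rule so that the new feasible set is exactly the image of the old pair of sets, and one must invoke \cref{fact:partial-trace-compact-convex} to confirm the collapsed set is again compact and convex, so that later applications of Sion's theorem during subsequent induction steps remain valid. Once these are in place, the reduction is purely structural — it does not alter the verifier or the error parameters — so the collapse follows without any blowup in problem size or degradation in $(c,s)$.
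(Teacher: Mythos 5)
Your proposal is correct and follows essentially the same route as the paper's proof: express the game value as the nested optimization of \cref{eqn:minmaxdef}, apply Sion's min-max theorem (\cref{fact:min-max}) to swap the innermost adjacent $\min$/$\max$ pair over the compact convex sets of \cref{fact:density-matrices-compact-convex,fact:partial-trace-compact-convex}, merge the resulting adjacent same-type operators by dissolving the intermediate partial-trace constraint, and induct, all without changing the verifier or the $(c,s)$ parameters. The only difference is presentational — you state the general inductive step with the merged constraint set explicitly, whereas the paper works out the $\entQSigmai[3]\subseteq\entQSigmai[2]$ case in detail and invokes the same argument for higher levels.
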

\begin{proof}
Note that for all $i$, $\entQSigmai[i-1]$ is trivially contained in $\entQSigmai[i]$.
We will show that for all $i > 2$, $\entQSigmai[i] \subseteq \entQSigmai[i-1]$ by an induction argument, beginning with $\entQSigmai[3] \subseteq \entQSigmai[2]$.

Recall from \cref{eqn:minmaxdef} in \cref{sec:prelim-entqph} that the value of an $\entQSigmai[3]$ protocol is equal to 
\[
    \hat{\upsilon} = \max_{\rho_1 \in \mathbf{D}(\calX_1)}~~\min_{\sigma_1 \in \mathbf{D}(\calY_1)}~~\max_{\rho_{2} \in \calA}~~\tr\left( R \left(\rho_2 \otimes \sigma_1\right)\right) ,
\]
where $R$ is the observable corresponding to the verifier accepting,
$\calX_1$, $\calY_1$, and $\calX_2$ are the Hilbert spaces containing the three proofs, and
$\calA = \{ \rho \in \mathbf{D}(\calX_1 \otimes \calX_2) \mid \tr_{\qreg{A_2}}(\rho) = \rho_1 \}$, which enforces that Alice's second proof is consistent with her first.

For any choice of $\rho_1\in\mathbf{D}(\calX_1)$, define
\begin{equation*}
    \upsilon(\rho_1) = \min_{\sigma_1 \in \mathbf{D}(\calY_1)}~~\max_{\rho_{2} \in \calA}~~\tr\left( R \left(\rho_2 \otimes \sigma_1\right)\right) ,
\end{equation*}
so that $\hat{\upsilon} = \max_{\rho_1 \in \mathbf{D}(\calX_1)} \upsilon(\rho_1)$.
Consider that $\mathbf{D}(\calY_1)$ and $\calA$ are compact and convex by \cref{fact:density-matrices-compact-convex,fact:partial-trace-compact-convex}.
Additionally, the function $\tr\left( R \left(\rho_2 \otimes \sigma_1\right)\right)$
is a composition of bilinear functions and so itself is bilinear in $\sigma_1$ and $\rho_1$.
Therefore, by \cref{fact:min-max}, a min-max theorem applies and
\[
    \upsilon(\rho_1) =
    \max_{\rho_{2} \in \calA}~~\min_{\sigma_1 \in \mathbf{D}(\calY_1)}~~\tr\left( R \left(\rho_2 \otimes \sigma_1\right)\right)
    =
    \min_{\sigma_1 \in \mathbf{D}(\calY_1)}~~\max_{\rho_{2} \in \calA}~~\tr\left( R \left(\rho_2 \otimes \sigma_1\right)\right) ,
\]
changing the optimization problem without changing the value.

Substituting this back into $\hat{\upsilon}$, we find
\begin{alignat}{3}
    \hat{\upsilon} &=& \underset{\rho_1 \in \mathbf{D}(\calX_1)}&{\max}~~\upsilon(\rho_1) \nonumber \\
    &=& \underset{\rho_1 \in \mathbf{D}(\calX_1)}&{\max}~~\max_{\rho_{2} \in \calA}~~\underset{\sigma_1 \in \mathbf{D}(\calY_1)}{\min}~~&\tr&\left( R \left(\rho_2 \otimes \sigma_1\right)\right) \nonumber \\
    &=& \underset{\rho_2 \in \mathbf{D}(\calX_1 \otimes \calX_2)}&{\max}~~\min_{\sigma_1 \in \mathbf{D}(\calY_1)} &\tr&\left( R \left(\rho_2 \otimes \sigma_1\right)\right) , \label{eqn:qsigma3to2_minmax}
\end{alignat}
where the final equality is clear given the definition of $\calA$.

We observe that \cref{eqn:qsigma3to2_minmax} matches the characterization of an $\entQSigmai[2]$ protocol given in \cref{eqn:minmaxdef}.
Therefore, we have shown the value $\hat{\upsilon}$ of an arbitrary $\entQSigmai[3]$ protocol is equivalent to the value of an $\entQSigmai[2]$ protocol.
Given an instance of an $\entQSigmai[3]$ problem verified by some polynomial-time uniform circuit $V$---corresponding to the observable $R$ above---whether $V$ is satisfiable by an $\entQSigmai[3]$ protocol is equivalent to whether $V$ is satisfiable by an $\entQSigmai[2]$ protocol,
i.e.\, $\entQSigmai[3] \subseteq \entQSigmai[2]$ and indeed they are equal. 

By way of induction, assume $\entQSigmai[2] = \entQSigmai[i]$ for some constant $i > 2$.
By the same min-max argument as just before, we may show the equivalence of the value of any $\entQSigmai[i+1]$ protocol to the value of a $\entQSigmai[i]$ protocol, thus showing the equivalence of the classes.
Therefore, the hierarchy $\QEPH$ collapses to $\entQSigmai[2]$.
\end{proof}

The equality between $\QEPH$ and $\QRGone$ is a straightforward consequence of the collapse lemma.

\begin{theorem}\label{thm:qrg-one-equals-pspace}
    $\QRGk[1] = \QEPH = \entQSigmai[2]$.
\end{theorem}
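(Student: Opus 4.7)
The plan is to combine the two key pieces that have already been established in the excerpt. Observe that the theorem is essentially a composition of \cref{lem:ent-qph-collapse} together with \cref{prop:second-levels-equal}, so almost no new work remains; the proposal is really just to chain these two equalities carefully.

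First, I would invoke \cref{lem:ent-qph-collapse}, which shows that for every constant $i\ge 2$, $\entQSigmai[i] = \entQSigmai[2]$. Combined with the trivial containments $\entQSigmai[0]\subseteq\entQSigmai[1]\subseteq\entQSigmai[2]$ (since a verifier may simply ignore extra proofs), this yields
\[
\QEPH \;=\; \bigcup_{i=0}^{\infty}\entQSigmai[i] \;=\; \entQSigmai[2],
\]
by the definition of $\QEPH$ in \cref{def:entqph}. This is the ``collapse'' half and requires no further argument beyond citing the previous lemma.

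Second, I would invoke \cref{prop:second-levels-equal}, which already established $\entQSigmai[2] = \QRGone$. Stitching these two equalities together gives the desired chain
\[
\QRGone \;=\; \entQSigmai[2] \;=\; \QEPH .
\]
There is no real obstacle here: the substantive work has been done in \cref{lem:ent-qph-collapse} (the min-max argument) and in \cref{prop:second-levels-equal} (the identification of the optimization expression for $\entQSigmai[2]$ with that of $\QRGone$, together with closure under complement). If anything, the only care required is to note that the equality $\entQSigmai[2]=\QRGone$ holds with matched completeness/soundness parameters so that the union over polynomial gaps in the definition of $\QEPH$ matches the standard definition of $\QRGone$, but this is immediate from the form of the optimization problem in \cref{eqn:minmaxdef}. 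Thus the theorem follows in a few lines.
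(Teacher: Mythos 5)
Your proposal is correct and matches the paper's own proof, which likewise just combines \cref{prop:second-levels-equal} ($\entQSigmai[2]=\QRGone$) with \cref{lem:ent-qph-collapse} ($\QEPH=\entQSigmai[2]$). No issues.
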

\begin{proof}
    Combining the results $\QRGk[1] = \entQSigmai[2]$ from \cref{prop:second-levels-equal} and $\entQSigmai[2]=\QEPH$ from \cref{lem:ent-qph-collapse} proves the equality.
\end{proof}

Next, we note that our collapse theorem can be strengthened to $\entQSigmai[i]=\entQSigmai[2]$ for any polynomially bounded $i$, rather than just constant. 
Like classical $\PH$, we define $\QEPH$ as the union of $\entQSigmai[i]$ for any constant $i$. This is a natural way of defining $\PH$ as it is key to proving that if $\PTIME=\NP$, then $\PH$ collapses. 
However, in contrast to collapse techniques for classical $\PH$, our reduction of $\entQSigmai[i]$ to $\entQSigmai[2]$ does not increase the problem size. In our proof of \cref{lem:ent-qph-collapse}, the $\entQSigmai[2]$ problem in \cref{eqn:qsigma3to2_minmax} optimizes over the same quantity as the original $\entQSigmai[i]$ problem. Therefore, our proof applies even to a super-constant number of rounds. The reduction is valid up to a polynomial number of rounds, after which the concatenation of the proof registers would lead to a proof too large for the polynomial-time verifier to accept.

\begin{corollary}\label{corr:entQSigmaPoly=2}
    $\entQSigmai[i] = \entQSigmai[2]$ for any polynomially-bounded $i$.
\end{corollary}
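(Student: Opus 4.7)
The plan is to revisit the inductive argument in the proof of \cref{lem:ent-qph-collapse} and carefully track what happens to the proof sizes and the verifier's running time as we iterate the reduction from level $i$ down to level $2$. The crucial observation, already implicit in the proof of \cref{lem:ent-qph-collapse}, is that each application of the min--max swap does not modify the verifier and does not introduce any auxiliary registers: looking at \cref{eqn:qsigma3to2_minmax}, the reduction simply relabels the partition of the existing registers among the two players, merging a register that was previously sent in round $j+2$ into the register sent in round $j$ (for the same player), yielding an optimization over $\mathbf{D}(\calX_j \otimes \calX_{j+2})$ in place of the previous two separate maximizations. In particular, the total number of qubits across all proofs is unchanged, and the verifier circuit corresponding to the observable $R$ is exactly the same.

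First, I would re-examine the base case \cref{eqn:qsigma3to2_minmax}: a three-round protocol with registers on $\calX_1, \calY_1, \calX_2$ is reduced to a two-round protocol in which Alice sends a single register on $\calX_1 \otimes \calX_2$ and Bob sends a register on $\calY_1$. Then I would observe that the inductive step in the proof of \cref{lem:ent-qph-collapse} has exactly the same flavor: assuming $\entQSigmai[i-1] = \entQSigmai[2]$, one collapses $\entQSigmai[i]$ to $\entQSigmai[i-1]$ by swapping the innermost min and max (which is legal by \cref{fact:min-max} together with \cref{fact:density-matrices-compact-convex,fact:partial-trace-compact-convex}) and fusing two same-player registers into one. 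The only thing this step requires of the index $i$ is that the newly-combined register still fit within the verifier's input space.

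Next, for an $\entQSigmai[i(n)]$ protocol where each round contributes at most $p(n)$ qubits, iterating the collapse $i(n)-2$ times yields an $\entQSigmai[2]$ protocol in which Alice's combined register consists of the concatenation of at most $\lceil i(n)/2 \rceil$ of her original registers, and likewise for Bob. The total size of the two combined proofs is therefore bounded by $i(n) \cdot p(n)$, which remains polynomial in $n$ whenever $i(n)$ is polynomially bounded. Since the verifier $V$ in the resulting $\entQSigmai[2]$ protocol is exactly the verifier from the original $\entQSigmai[i(n)]$ protocol, and that verifier already runs in time polynomial in $n$ by hypothesis, the resulting protocol is a legitimate $\entQSigmai[2]$ protocol with the same completeness and soundness parameters.

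The main obstacle is not really the min--max step itself, which is handled entirely by \cref{lem:ent-qph-collapse}; rather it is bookkeeping, namely checking that the concatenated proof registers never exceed polynomial size and that the verifier is unchanged throughout. This is also precisely what breaks the argument beyond a polynomial number of rounds: if $i(n)$ were, say, exponential in $n$, the combined register $\calX_1 \otimes \calX_3 \otimes \cdots \otimes \calX_{i(n)-1}$ would no longer fit as a proof to a polynomial-time verifier, and so the resulting expression would not describe a valid $\entQSigmai[2]$ protocol. Hence the corollary holds exactly up to polynomially-bounded $i$.
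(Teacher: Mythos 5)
Your proposal is correct and follows essentially the same route as the paper: the paper's justification is precisely that the reduction in \cref{lem:ent-qph-collapse} leaves the verifier and the total proof size unchanged (it only merges registers and swaps a min and a max), so it iterates without blowup for any polynomially-bounded number of rounds, failing only when the concatenated registers would exceed what a polynomial-time verifier can accept. Your version just makes the bookkeeping (total proof size at most $i(n)\cdot p(n)$, completeness and soundness preserved exactly) more explicit than the paper does.
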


Finally, our results also prove that $\entQSigmai$ is robust to the choice of error parameters.

\begin{theorem}\label{thm:entqsigmaerror}
    For any choice of $c,s$ such that $c-s\geq 1/\poly(n)$, it holds that $\entQSigmai[i]\left(c,s\right) = \entQSigmai[i]\left(\tfrac{2}{3},\tfrac{1}{3}\right)$.
\end{theorem}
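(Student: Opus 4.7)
The plan is to reduce error reduction for $\entQSigmai[i]$ to error reduction at level $2$ via the collapse lemma (\cref{lem:ent-qph-collapse}), and then perform parallel repetition on the resulting two-message game, invoking the min-max structure of $\entQSigmai[2]=\QRGone$ together with Hoeffding's inequality.

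First, I would observe that the proof of \cref{lem:ent-qph-collapse} rewrites an $\entQSigmai[i]$ protocol as an $\entQSigmai[2]$ protocol over the \emph{same} objective (\cref{eqn:qsigma3to2_minmax}), and thus does not alter completeness or soundness. Hence $\entQSigmai[i](c,s) = \entQSigmai[2](c,s)$ for every $(c,s)$, reducing the task to proving $\entQSigmai[2](c,s) = \entQSigmai[2](\tfrac{2}{3},\tfrac{1}{3})$ whenever $c-s\geq 1/\poly(n)$. I would achieve this by parallel repetition: given a $\entQSigmai[2](c,s)$ verifier $V$ corresponding to observable $R$ on $\calX\otimes\calY$, define $V^{(k)}$ to accept $k$ proofs from Alice (possibly jointly entangled) on $\calX_1\otimes\cdots\otimes\calX_k$ and then $k$ proofs from Bob (possibly jointly entangled) on $\calY_1\otimes\cdots\otimes\calY_k$, run $V$ on each pair $(\calX_i,\calY_i)$, and accept iff more than $\tfrac{c+s}{2}k$ of the $k$ runs accept. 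Choose $k = \Theta(n/(c-s)^2)$.

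To analyze yes-instances, \cref{prop:second-levels-equal} supplies, via a min-max theorem, an optimal single-round Alice strategy $\rho^\ast\in\mathbf{D}(\calX)$ satisfying $\tr(R(\rho^\ast\otimes\sigma))\geq c$ for every $\sigma\in\mathbf{D}(\calY)$. Alice plays $(\rho^\ast)^{\otimes k}$. For any Bob strategy $\sigma^{(k)}$, the $k$ pairwise measurements commute since they act on disjoint tensor factors. Letting $X_i$ be the indicator of the $i$-th round accepting, a sequential Kraus-operator argument shows that after conditioning on $(X_1,\ldots,X_{i-1})$, the post-measurement state on the remaining pairs still factors as $(\rho^\ast)^{\otimes(k-i+1)}\otimes\sigma'$ across the Alice/Bob cut, where $\sigma'\in\mathbf{D}(\calY^{\otimes(k-i+1)})$. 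Taking the marginal on the $i$-th pair yields $\rho^\ast\otimes\sigma_i'$ with $\sigma_i'\in\mathbf{D}(\calY)$, and the saddle-point inequality gives $\Pr[X_i=1\mid X_1,\ldots,X_{i-1}]\geq c$. Thus $\sum_i X_i$ stochastically dominates $\mathrm{Bin}(k,c)$, and Hoeffding's inequality (\cref{fact:hoeffding}) bounds $\Pr[\sum_i X_i < \tfrac{c+s}{2}k]\leq \exp(-\Omega(k(c-s)^2))\leq 2^{-n}$. A symmetric argument using Bob's optimal $\sigma^\ast$ handles no-instances, yielding $\entQSigmai[2](c,s)\subseteq\entQSigmai[2](1-2^{-n},2^{-n})\subseteq\entQSigmai[2](\tfrac{2}{3},\tfrac{1}{3})$; the reverse inclusion follows identically by amplifying any $(\tfrac{2}{3},\tfrac{1}{3})$ protocol.

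The main obstacle is handling Bob's potential entanglement across the $k$ parallel copies, which breaks independence of the $X_i$. The resolution is the stochastic-domination argument above: because Alice plays a product strategy, the conditional state always remains product across the Alice/Bob cut, and the saddle-point inequality applies uniformly to any Bob marginal on a single copy. This property relies critically on the min-max theorem being available at level $2$, which is exactly why the collapse lemma must be invoked first.
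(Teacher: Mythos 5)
Your proposal follows the same skeleton as the paper's proof: use \cref{lem:ent-qph-collapse} to observe that the collapse preserves the game value, so $\entQSigmai[i](c,s)=\entQSigmai[2](c,s)$, and then amplify at the second level by exploiting the identification with $\QRGone$. The one difference is that the paper treats the amplification step as a black box, citing the known parallel repetition theorem for $\QRGone$ due to Gutoski and Watrous, whereas you prove it inline via the product-of-saddle-points strategy. Your inline argument is correct: since Alice's $k$ copies of $\rho^\ast$ are in tensor product with each other and with Bob's registers, the conditional state on any unmeasured pair remains of the form $\rho^\ast\otimes\sigma_i'$ no matter how Bob entangles across copies, and the saddle-point inequality then bounds each round's conditional acceptance probability; this is essentially the proof of the cited theorem. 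One small point of care: \cref{fact:hoeffding} as stated requires independent variables, and your $X_i$ are not independent, so you should route the concentration step through the stochastic domination by $\mathrm{Bin}(k,c)$ (or Azuma--Hoeffding) as you indicate, rather than applying Hoeffding to the $X_i$ directly. Your version buys self-containment at the cost of length; the paper's buys brevity at the cost of an external citation. Both arrive at the same conclusion.
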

\begin{proof}
    The reverse containment is trivial, so we focus on proving the forward direction, reducing $\entQSigmai[i]\left(c,s\right)$ to $ \entQSigmai[i]\left(\tfrac{2}{3},\tfrac{1}{3}\right)$.
    Again appealing to the fact that our proof of \cref{lem:ent-qph-collapse} shows that an $\entQSigmai[3]$ problem is equivalent to an $\entQSigmai[2]$ problem with the same game value, we observe that our proof implies $\entQSigmai[2](c,s)=\entQSigmai[i](c,s)$.
    Then, because the equality of $\QRGone$ and $\entQSigmai[2]$
    (\cref{thm:qrg-one-equals-pspace}) is also based on the optimization definition from \cref{eqn:minmaxdef}, the acceptance probability remains preserved and $\entQSigmai[2](c,s)=\QRGone(c,s)$. We may then appeal to the result of \cite{gutoski2005qrg} that a parallel repetition theorem holds for $\QRGone$, so that $\QRGone(c,s)=\QRGone\left(\tfrac{2}{3},\tfrac{1}{3}\right)$.
    By the same reasoning as a moment ago, this last class equals $\entQSigmai[2]\left(\tfrac{2}{3},\tfrac{1}{3}\right)$.
    Contracting this sequence of equalities, we conclude that $\entQSigmai[i]\left(\tfrac{2}{3},\tfrac{1}{3}\right)$ equals our original class $\entQSigmai[i](c,s)$.
\end{proof}

\section{\texorpdfstring{$\PH$ and $\QCPH$ Are Contained in $\QPH$}{PH and QCPH Are Contained in prodQPH}}\label{sec:direct-proof-ph-prodqph}

We prove that $\QCPH \subseteq \QPH$. 
While this result is what one might expect, proving this containment was left as an option question by Gharibian et al.\ \cite{GSSSY22_qph}.
It is trivial to see that $\PH \subseteq \QCPH$, and, combining these two containments, we have $\PH \subseteq \QCPH \subseteq \QPH$, establishing that quantifying over unentangled quantum proofs is at least as powerful as quantifying over classical proofs. 

The central challenge in proving that $\QCPH \subseteq \QPH$ is that the proofs in $\QPH$ are allowed to be quantum states, which, upon measurement, give rise to a distribution over classical strings. 
A flawed idea is to simply measure the quantum proofs to get classical proofs, and then run the $\QCPH$ verification protocol with no modifications. 
Suppose, however, that Alice has a winning strategy in the $\QCPH$ protocol, so she always has a winning response to any classical proof that Bob sends. 
When simulating this in $\QPH$, Bob can instead send a \emph{quantum state}---a superposition over many classical proofs---preventing Alice from sending an optimal response. In particular, Alice may not know which response to send, since she doesn't know which classical proof the verifier will observe upon measurement.  

We prevent this potential cheating by requiring each player to send multiple copies of each of their proofs. We prove that this is enough to force both players to send classical strings with high probability. 

\begin{theorem}\label{thm:direct-proof-ph-prodqph}
    $\PH \subseteq \QCPH \subseteq \QPH$.
\end{theorem}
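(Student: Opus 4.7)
The containment $\PH \subseteq \QCPH$ holds trivially, since a deterministic classical verifier is a special case of a polynomial-time quantum verifier, so every $\Sigmai[i]$ protocol is already a valid $\QCSigmai[i]$ protocol. The bulk of the work is to show $\QCPH \subseteq \QPH$, which I plan to do by establishing $\QCSigmai[i] \subseteq \QSigmai[O(i)]$ via a \emph{repeat-and-measure} simulation that forces the $\QPH$-provers to behave essentially classically.

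Starting from a $\QCSigmai[i]$ protocol with verifier $V$ acting on proofs of length $m$ and, by the error reduction of \cite{GSSSY22_qph}, error $\eta = 2^{-\poly(n)}$, I construct a $\QSigmai[ki]$ protocol for an appropriate polynomial $k$ as follows. Each original $\QCPH$-turn is replaced by a \emph{batch} of $k$ consecutive $\QPH$-turns in which the same prover sends $k$ unentangled $m$-qubit proofs. The new verifier $W$ measures each of the $ki$ proofs in the computational basis and applies an \emph{agreement test} to each batch, which passes if all $k$ outcomes in that batch coincide. Then $W$ decides: if any of Alice's batches fails the agreement test, reject; else if any of Bob's batches fails, accept; otherwise extract the $i$ common strings and output $V$'s answer on them.

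For completeness, the honest Alice runs the $\QCPH$-optimal strategy but replaces each intended classical proof $y_j^*$ by $|y_j^*\rangle^{\otimes k}$, so her agreement tests pass deterministically, and whenever Bob's tests pass the extracted strings satisfy $V$ with probability at least $1-\eta$. Soundness is the main technical step. Because the $\QPH$ unentanglement promise forces each batch to be a product state, the $k$ measurement outcomes in a batch are independent samples from diagonal distributions $q_1,\ldots,q_k$, and the agreement probability equals $\sum_y \prod_\ell q_\ell(y)$. A pairwise Cauchy--Schwarz bound $\sum_y q_a(y)q_b(y) \leq \sqrt{\max_y q_a(y)\,\max_y q_b(y)}$ applied to disjoint pairs of the independent outcomes yields $\sum_y \prod_\ell q_\ell(y) \leq \sqrt{\prod_\ell \max_y q_\ell(y)}$, which, together with a short case analysis (or an application of Hoeffding's inequality from \cref{fact:hoeffding} to the indicators of each outcome equaling a fixed $y$), shows that whenever the agreement probability is not exponentially small in $k$ there is a unique \emph{dominant} string $y^*$ on which every $q_\ell$ concentrates to $1-o(1)$. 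Conditional on the test passing, the extracted string then equals $y^*$ with all but inverse-exponential probability, and the honest opponent, who sees the public description of the batch before sending his own response, can compute $y^*$ and reply with the $\QCPH$-optimal move as $k$ copies of the corresponding basis state, so the $\QCPH$ soundness carries through.

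The main obstacle I anticipate is cleanly controlling the accumulation of soundness error across the $i$ rounds, since each prover's response is based on a \emph{prediction} of the previous batch's dominant string, and any mismatch between that prediction and the actual measured outcome can potentially propagate. I would handle this by taking $k$ polynomial in $n$ large enough to make each per-batch error $2^{-\poly(n)}$ and union-bounding over the constant number of batches; the already-amplified $\QCPH$ error $\eta$ absorbs these losses, leaving a constant completeness--soundness gap in the $\QSigmai[ki]$ protocol. Combining $\PH \subseteq \QCPH$ with $\QCSigmai[i] \subseteq \QSigmai[O(i)]$ for every constant $i$ yields $\PH \subseteq \QCPH \subseteq \QPH$.
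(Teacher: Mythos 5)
Your overall strategy is the same as the paper's: replace each classical proof by a batch of $k$ unentangled copies, measure everything in the computational basis, run an agreement test on each batch, and have the honest prover respond to the \emph{mode} of the opposing prover's (publicly known) diagonal distribution. The trivial direction $\PH \subseteq \QCPH$ is also handled identically. However, there is one genuine flaw in your parameter choice. You take $k$ polynomial in $n$ so that each per-batch error is $2^{-\poly(n)}$, but then your target class is $\QSigmai[ki]$ with $ki = \poly(n)\cdot i$ alternations, which is \emph{not} a level of $\QPH$: by definition $\QPH = \bigcup_i \QSigmai[i]$ ranges only over constant $i$, and your own claimed conclusion ``$\QCSigmai[i] \subseteq \QSigmai[O(i)]$'' is inconsistent with a polynomial $k$. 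The fix is to take $k$ a \emph{constant}: since a non-mode string has probability at most $1/2$ in the relevant distribution, the probability that a batch passes the agreement test on a string other than the predicted one is at most $2^{-k}$, so a constant $k$ already yields completeness $c(1-2^{-k})$ and soundness $s + 2^{-k}(1-s)$, which preserves an $\Omega(1)$ gap --- exactly what the definition of $\QSigmai[2ki]$ requires. This is precisely the paper's route; your exponentially small per-batch error is unnecessary and, as set up, unobtainable within $\QPH$.

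Two smaller points. First, your soundness analysis is phrased as a \emph{conditional} statement (``conditional on the test passing, the extracted string equals $y^*$ with all but inverse-exponential probability''), which is false in general: if every $q_\ell$ is uniform on two strings, the agreement probability is $2^{-k+1}$ and, conditioned on passing, the non-dominant string appears with probability $1/2$. What you actually need (and what the paper bounds directly) is the \emph{unconditional} probability of the bad event ``test passes and the common string differs from the prediction,'' which is at most $2^{-k}$; the Cauchy--Schwarz/H\"older machinery is then unnecessary. Second, make sure the prediction $y^*$ is well defined when the $k$ distributions $q_1,\dots,q_k$ in a batch have different modes; defining the honest response in terms of a single fixed rule (e.g., the mode of the first proof in the batch, with lexicographic tie-breaking) and bounding $\sum_{y\neq y^*}\prod_\ell q_\ell(y)\le \max_{y\ne y^*}q_1(y)^{k-1}\cdot 1$ is the cleanest way to avoid this ambiguity.
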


The exact error parameters for \cref{thm:direct-proof-ph-prodqph} are stated in \cref{eqn:QCinQPH} below.
In particular, the reduction is only capable of producing a $\QPH$ instance with a constant promise gap.
However, the containment does hold for any $\QCPH$ instance with at least an inverse-polynomial promise gap, due to known error reduction for $\QCPH$ \cite{GSSSY22_qph}.

\begin{proof}
    Consider any level $\QCSigmai$ of $\QCPH$. 
    We show that for any integer $k\geq 1$, 
    \begin{equation}\label{eqn:QCinQPH}
        \QCSigmai(c,s) \subseteq \QSigmai[2ki]\left(c\left(1-2^{-k}\right),s + 2^{-k}\left(1 - s\right)\right).
    \end{equation} 

    We simulate any $\QCSigmai$ protocol in $\QSigmai[2ki]$ as follows. 
    After the first $2k$ turns, the verifier has $k$ proofs from Alice and $k$ proofs from Bob, and the verifier discards all $k$ proofs from Bob.
    For the next $2k$ turns, the verifier repeats this process, except they keep Bob's proofs rather than Alice's, which we denote by $\sigma_{1,1},\dots, \sigma_{1,k}$.
    This is repeated $i$ times in total until all $2ki$ turns are over. At the end of the game, the verifier has kept the following $ki$ proofs:
    \[
    \rho_{1,1},\dots, \rho_{1,k}, \sigma_{1,1}, \dots, \sigma_{1,k}, \rho_{2,1},\dots, \rho_{2,k}, \dots .
    \]
    
    For each chunk of $k$ proofs, the verifier measures each quantum state in the standard basis to get $k$ classical strings. If all $k$ classical strings are equal, we say that the player passed the check, and failed otherwise.  
    If a player fails any check, then the other player is declared the winner.
    If both players pass all checks, then the verifier keeps one copy of each classical proof from each chunk and runs the $\QCPH$ verification procedure to determine the winner. 

    Let $A = (A_\yes, A_\no)$ be a promise problem in $\QCSigmai(c,s)$, and let $x$ be some fixed input. 
    If $x \in A_\yes$, then Alice has no incentive to cheat and so we refer to her as the honest prover, while if $x \in A_\no$ then we consider Bob the honest prover.
    We will define a strategy for the honest prover and show that no matter the strategy of the dishonest prover, the honest prover will win high probability. 
    In particular, the honest prover's strategy will be to always send classical proofs, and when replying to a dishonest provers proof $\rho = \sum_j p_j \ket{j}\!\!\bra{j}$, the honest prover will respond as if only the string $\hat{\text{\j}}$ with the maximum probability $p_{\hat{\text{\j}}}$ was sent (we arbitrarily choose to break ties by lexicographic order).

    If the dishonest prover fails any check, they lose, so we assume now that the dishonest prover passes every check.
    Then, since both provers pass every check, the verifier has the $i$ classical proofs $y_1, \dots, y_i$, where the proofs with odd indices are from Alice and the others are from Bob. 
    In one case, suppose that each of the dishonest prover's moves turns out to be as the honest prover expected. 
    Then the situation is identical to the original $\QCPH$ instance, and so the honest prover wins with the probability of the original protocol. 
    
    In the second case, at least one chunk of $k$ proofs (sampled independently from $k$ distributions) 
    are equal to each other but not to the proof $\hat{\text{\j}}$ expected by the honest prover. 
    Any string besides $\hat{\text{\j}}$ has $p_j \leq 1/2$,
    so the probability of this case occurring, with all $k$ samples matching, is at most $2^{-k}$. 
    
    Therefore, in the $\QPH$ protocol, if $x \in A_\yes$, Alice wins with probability at least $c\left(1 -2^{-k}\right)$. If $x \in A_\no$, then Bob wins with probability at least $(1-s)\left(1-2^{-k}\right)$, so Alice wins with probability at most 
    \[
    1 - (1-s)\left(1-2^{-k}\right) = s + 2^{-k}(1 - s).
    \]
    To summarize, the dishonest prover is unable to affect the outcome of the game with more than a small probability. 
    We conclude that $\QCSigmai \subseteq \QSigmai[2ki]$, and therefore $\QCPH\subseteq \QPH$. 
\end{proof}

\section{Distribution Hierarchies}\label{sec:mixed}

We introduce another generalization of the polynomial hierarchy where the provers send probability distributions over bit strings. 
This gives rise to two new hierarchies: the distributional polynomial hierarchy $\distributionPH$ and its quantum analogue $\distributionQCPH$, which is the same as $\distributionPH$ but with a quantum verifier.
We will focus primarily on $\distributionPH$ since the techniques used to analyze $\distributionPH$ will work for $\distributionQCPH$ as well.

$\mixedPH$ is similar to all of the hierarchies studied in this work.
In $\mixedPH$, the distributions are public (the provers have full knowledge of the distributions that have been sent), but none of the distributions are sampled until every distribution has been sent.
One can think of this as a non-interactive game, where the players use public, mixed strategies. 
Importantly, the distributions are not correlated across rounds. 

While $\mixedPH$ is a classical complexity class, our motivation for studying it is to further understand the quantum polynomial hierarchies. 
In particular, $\mixedPH$ involves proofs that are classical mixtures of bit strings. 
This complements $\pureQPH$, where the proofs are quantum superpositions of bit strings, and $\QPH$, where the proofs are both (classical mixtures of quantum superpositions).
Does the computational power of the polynomial hierarchy increase when the proofs only involve classical probability distributions? Or does the increased computational power come only from the quantum superposition allowed in $\QPH$ and $\pureQPH$?
In this section, we resolve these questions. 

\begin{theorem}\label{thm:mixedph=ph}
    $\mixedPH = \PH$.
\end{theorem}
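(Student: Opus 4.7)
The plan is to prove the two containments separately. The direction $\PH \subseteq \distributionPH$ is immediate: any $\Sigmai[i]$ protocol with verifier $M$ is also a $\mixedSigmai[i]$ protocol in which each prover's distribution is a point mass on the classical proof they would have sent in the $\PH$ protocol, with identical completeness and soundness.

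For the reverse direction $\distributionPH \subseteq \PH$, the main tool is the simplification lemma \cref{lem:lipton}, which states that in any constant-round $\distributionPH$-style protocol, each prover may restrict themselves to uniform distributions over multisets of classical proofs of size $\poly(n)$, changing the game's value by at most an arbitrarily chosen $\eps > 0$. Fix a promise problem $L \in \mixedSigmai[i]$ with completeness $c$, soundness $s$, and gap $c - s \geq 1/\poly(n)$. Applying \cref{lem:lipton} with $\eps = (c-s)/3$ to both the existential and universal quantifiers, on yes-instances there is a tuple of uniform-$\poly(n)$-support strategies achieving acceptance at least $c - \eps$, while on no-instances \emph{every} such tuple achieves acceptance at most $s + \eps$. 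The remaining gap $c - s - 2\eps \geq 1/(3 \cdot \poly(n))$ is still inverse polynomial.

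Next, I construct the corresponding $\Sigmai[i]$ protocol. On input $x$, the $\PH$ verifier $M'$ interprets each polynomial-length message $y_j$ as the encoding of a multiset $S_j$ of at most $\poly(n)$ classical proofs for the original $\mixedPH$ verifier $M$ (padding with repetitions to a canonical size). $M'$ then exhaustively enumerates all tuples $(z_1,\dots,z_i) \in S_1 \times \cdots \times S_i$, counts the fraction on which $M(x,z_1,\dots,z_i) = 1$, and accepts iff this fraction exceeds the threshold $(c+s)/2$. Because $i$ is a constant and each $|S_j|$ is polynomial, the product $|S_1|\cdots|S_i|$ is polynomial, so $M'$ runs in polynomial time and performs the comparison exactly using integer arithmetic on polynomial-bit numerator and denominator.

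The main obstacle is the no-instance side: I must show that \emph{every} choice of $(y_1,\dots,y_i)$, and hence every tuple of uniform $\poly(n)$-support strategies, fails to exceed the threshold. This is exactly the content of the uniform-support side of \cref{lem:lipton}, since the optimum over simple strategies lies within $\eps$ of the true $\distributionPH$ game value, so on no-instances it is bounded by $s + \eps < (c+s)/2$; completeness on yes-instances follows symmetrically. A secondary subtlety is that the deterministic $\PH$ verifier must compute the expected acceptance exactly rather than sample it, but this is fine because the average of polynomially many 0/1 outcomes is a rational with polynomial-bit representation and $(c+s)/2$ is polynomial-time computable by hypothesis.
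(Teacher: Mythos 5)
Your second containment, $\distributionPH \subseteq \PH$, follows the paper's route: invoke \cref{lem:lipton} to restrict to uniform distributions over polynomial-size multisets, have the provers send the supports explicitly, and let the verifier evaluate the game. Your variant of exhaustively enumerating $S_1\times\cdots\times S_i$ rather than sampling is a reasonable alternative, though it does not actually eliminate the appeal to $\BPP\subseteq\PH$: the underlying verifier $M$ is itself randomized, so its internal coins still need Sipser--Lautemann, which is exactly what the paper cites. Two parameter-level slips: \cref{lem:lipton} perturbs the value by $k\eps$, not $\eps$, so you need $\eps \approx (c-s)/(3k)$ (the paper uses $\eps < 1/(12k)$); and the lemma is proved only for \emph{constant} $\eps$, which is fine because $\mixedSigmai[i]$ is defined with $c-s\in\Omega(1)$, but your ``$c-s\geq 1/\poly(n)$'' framing is not supported. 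Also, soundness does not require that \emph{every} tuple $(y_1,\dots,y_i)$ fall below threshold, only the alternation $\forall y_1 \exists y_2 \cdots$; the lemma delivers exactly that alternating guarantee, so the conclusion stands, but the claim as you state it is false.

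The genuine gap is in the direction you call ``immediate.'' The point-mass embedding of $\PH$ into $\distributionPH$ does \emph{not} preserve completeness (or soundness) beyond level $2$. In $\Sigmai[3]$, the honest prover's third move is a function of the specific string $y_2$; in $\mixedSigmai[3]$ she must commit to a distribution $\rho_3$ knowing only the distribution $\rho_2$, and the eventual samples are drawn \emph{independently}. Concretely, take $M(y_1,y_2,y_3)=1$ iff $y_2=y_3$: the $\Sigmai[3]$ completeness condition $\exists y_1\forall y_2\exists y_3$ holds trivially, yet against uniform $\rho_2$ any independent $\rho_3$ accepts with probability $2^{-m}$. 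This is precisely the ``effective response to each pure move does not give an effective response to a mixture'' obstacle the paper highlights for $\QCPH\subseteq\QPH$. The paper's proof of $\PH\subseteq\distributionPH$ therefore reuses the machinery of \cref{thm:direct-proof-ph-prodqph}: each prover sends $k$ copies of each proof, the verifier samples all copies and rejects the sender if they disagree, which forces near-point-mass distributions; the cost is that level $i$ of $\PH$ lands in level $2ki$ of $\distributionPH$ with only a constant promise gap. Your proposal is missing this idea entirely, and without it the forward containment is unproven.
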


That is, if the proofs are distributions over classical proofs, $\PH$ does not increase in power.
The proof of \cref{thm:mixedph=ph} relies on a technical lemma that says the distributions sent in $\mixedPH$ can be sparse and uniform. This lemma generalizes a result due to Lipton and Young \cite{lipton1994simple} and Althöfer \cite{ALTHOFER1994339}. 

In the remainder of this section, we will formally define $\distributionPH$, prove the technical lemma, and prove \cref{thm:mixedph=ph}. 
Finally, we will discuss $\distributionQCPH$ (the same as $\distributionPH$ but with a quantum verifier) and the power of classical versus quantum proofs.

We begin by formally defining $\mixedPH$.
Let $\calD_m$ denote the set of all probability distributions over $\{0,1\}^m$.
For a computation $M$ which takes length-$m$ strings as input and a distribution $\rho\in\calD_m$, let $M(\rho)$ implicitly refer to $M(y)$ for $y\sim \rho$.

\begin{definition}[$i$-th level of the distribution polynomial hierarchy ($\mixedSigmai$)]
    A promise problem $L =(L_{\text{yes}},L_{\text{no}})$ is in $\mixedSigmai(c,s)$ for polynomial-time computable functions $c,s : \N \to [0,1]$ if there exists a classical polynomial-time randomized Turing Machine $M$ such that
    \begin{itemize}
        \item Completeness: $x\in L_{\text{yes}} \Rightarrow \exists \rho_1 \forall \rho_2 \exists \rho_3 \dots Q_i \rho_i$ such that $\Pr\left[M(\rho_1,\dots,\rho_i) = 1\right] \geq c$,
        \item Soundness: $x\in L_{\text{no}} \Rightarrow \forall \rho_1 \exists \rho_2 \forall \rho_3 \dots \overline{Q_i} \rho_i$ such that $\Pr\left[M(\rho_1,\dots,\rho_i) = 1\right] \leq s$,
    \end{itemize}
    where each $\rho_k$ is a distribution in $\calD_m$ for some polynomially-bounded $m$, and each $\rho_k$ is independent.
    $Q_i$ is $\exists$ if $i$ is odd and $\forall$ otherwise, and $\overline{Q_i}$ is the complement of $Q_i$.
    When the completeness/soundness parameters are not specified, define
    \[
        \mixedSigmai[i] \coloneqq \bigcup_{c-s\in\Omega(1)} \mixedSigmai[i](c,s).
    \]
\end{definition}

\begin{definition}[The distribution polynomial hierarchy ($\distributionPH$)]\label{def:mixedPH}
    The distribution polynomial hierarchy is defined as 
    \[\mixedPH = \bigcup_{i = 0}^{\infty} \mixedSigmai[i].\]
\end{definition}

We make a few comments on our definition of $\distributionPH$.
If we defined $\mixedPH$ without the bounded-error condition (i.e., no error probability), then it would be equal to $\PH$. We will also generally leave the input $x$ implicit.
Finally, if one prefers, they can equivalently think of the provers sending quantum mixed states that are immediately measured in the computational basis (instead of probability distributions that are immediately sampled). This is why we choose to denote the probability distributions as $\rho_i$ in our definition. 

As we discussed in \cref{sec:prelim-entqph} for $\QEPH$, one can think of $\mixedPH$ as a game, where two competing provers take turns sending distributions over bit strings to a verifier. Then the verifier $M$ draws one sample from each distribution and runs a polynomial-time randomized algorithm to determine a winner.
Additionally, just like with $\QEPH$, we can express the acceptance probability of the verifier as the following optimization problem:
\[
\Pr[M \text{ accepts}] = \max_{\rho_1 \in \calD_m} \min_{\rho_2 \in \calD_m} \ldots \operatorname*{Q^i}_{\rho_i \in \calD_m} \E[M(\rho_1, \ldots, \rho_i)],
\]
where
$\operatorname*{Q^i}$ denotes $\max$ if $i$ is odd and $\min$ otherwise. 
The expectation is over the randomness in the distributions $\rho_1, \ldots, \rho_i$. Note that since $M(\rho_1, \ldots, \rho_i)$ is a Bernoulli random variable, $\E[M(\rho_1, \ldots, \rho_i)] = \Pr[M(\rho_1, \ldots, \rho_i) = 1]$.

The distributions sent in $\mixedPH$ are over $\{0,1\}^m$ for some polynomially-bounded $m$, so, in general, the support can be exponentially large in $m$. 
We will prove a technical lemma that says the provers can send \emph{uniform} distributions over $\poly(m)$ bit strings without changing the outcome of the game too much. 

\begin{lemma}\label{lem:lipton}
    For any constant $k \in \N$ and any classical randomized Turing Machine $M$ accepting $k$ length-$m$ inputs, if
    \[
        \max_{\rho_1 \in\calD_m} \min_{\rho_2 \in\calD_m} \max_{\rho_3 \in\calD_m} \dots \operatorname*{Q^k}_{\rho_k \in\calD_m} \Pr\left[M(\rho_1,\dots,\rho_k) = 1\right] = v,
    \]
    then
    for any constant $\epsilon>0$, 
    \[
        \max_{\rho_1\in U_{t_k}} \min_{\rho_2\in U_{t_{k-1}}} \max_{\rho_3 \in U_{t_{k-2}}} \dots \operatorname*{Q^k}_{\rho_k \in U_{t_{1}}} \Pr\left[M(\rho_1,\dots,\rho_k)=1\right] \in [v - k\epsilon, v+k\epsilon] ,
    \]
    where $t_i \coloneqq \lceil m^{2i}/2\epsilon^2 \rceil$, 
    $U_t$ denotes the set of uniform distributions over multi-sets of size at most $t$ of strings in $\{0,1\}^m$,
    and $\operatorname*{Q^k}$ denotes $\max$ if $k$ is odd and $\min$ otherwise.
    The complement of this result also holds (i.e., when the sequence starts with $\min$ instead of $\max$).
\end{lemma}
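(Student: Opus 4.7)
The plan is to induct on $k$, generalizing the classical sampling argument of Lipton--Young and Alth\"{o}fer. The base case $k=1$ is immediate, since $\E_{y\sim\rho_1}[M(y)]$ is linear in $\rho_1$, so the outer optimum is attained by a pure strategy, i.e., a size-$1$ uniform distribution. For the inductive step, assume the lemma with slack $(k-1)\epsilon$ for $(k-1)$-quantifier games. Consider a $k$-quantifier instance of value $v$; without loss of generality $Q^1 = \max$, the complementary case being symmetric. Write $h(\rho_1,\dots,\rho_k) = \Pr[M(\rho_1,\dots,\rho_k) = 1]$, which is multilinear in its arguments.

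For the upper bound, I show $\text{(sparse value)} \leq v + k\epsilon$. Fix any $\rho_1 \in U_{t_k}$ and view the residual $(k-1)$-quantifier problem as a randomized Turing machine $M_{\rho_1}(y_2,\dots,y_k) := M(\rho_1,y_2,\dots,y_k)$ over length-$m$ inputs. By the inductive hypothesis, the sparse value of this inner game is within $(k-1)\epsilon$ of its general value, and that general value is at most $v$ by optimality of the outer $\max$. Taking $\max_{\rho_1 \in U_{t_k}}$ yields $\text{(sparse value)} \leq v + (k-1)\epsilon \leq v + k\epsilon$.

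For the lower bound, $\text{(sparse value)} \geq v - k\epsilon$, I proceed by sampling. Let $\rho_1^*$ attain the outer $\max$ and draw $t_k$ i.i.d.\ strings from it to form $\hat\rho_1 \in U_{t_k}$. For any fixed sparse opponent profile $(\rho_2,\dots,\rho_k)$ with $\rho_i \in U_{t_{k-i+1}}$, the quantity $h(\hat\rho_1,\rho_2,\dots,\rho_k)$ is an empirical mean of $t_k$ i.i.d.\ random variables in $[0,1]$ with mean $h(\rho_1^*,\rho_2,\dots,\rho_k)$. Hoeffding bounds the probability of $\epsilon$-deviation by $2\exp(-2 t_k\epsilon^2)$, while the number of sparse profiles is at most $\prod_{j=1}^{k-1} 2^{m t_j}$. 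Union-bounding, a good $\hat\rho_1$ exists uniformly over all sparse opponent profiles. The inductive hypothesis applied to the inner $(k-1)$-quantifier game from the fixed $\rho_1^*$ ensures its sparse value is at least $v - (k-1)\epsilon$; combined with the uniform $\epsilon$-closeness from sampling, the sparse value from $\hat\rho_1$ is at least $v - k\epsilon$.

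The main obstacle is the parameter bookkeeping: the Hoeffding tail $\exp(-\Theta(t_k\epsilon^2))$ must dominate the $2^{O(m\sum_{j<k} t_j)}$ enumeration of sparse opponent profiles. The geometric choice $t_j = \lceil m^{2j}/(2\epsilon^2)\rceil$ is calibrated precisely so that $2t_k\epsilon^2 = m^{2k}$ exceeds $m\sum_{j<k} t_j = O(m^{2k-1}/\epsilon^2)$ by a factor of $m$, closing the union bound for all sufficiently large $m$; finitely many small-$m$ instances can be absorbed by inflating constants. Beyond this, the argument is clean inductive bookkeeping, with the bounded-error formulation essential because without randomness the outer optima need not be pure when nested quantifiers intervene.
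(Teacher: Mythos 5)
Your proof is correct and follows essentially the same route as the paper: induction on $k$, sparsifying the inner $(k-1)$-quantifier game via the inductive hypothesis and the outer distribution via sampling from $\rho_1^*$, Hoeffding, and a union bound over sparse inner profiles, with the geometric growth of the $t_j$ calibrated to beat the enumeration. The only differences are cosmetic: the paper starts the induction at $k=2$ by citing Lipton--Young directly rather than deriving it from a $k=1$ base case, and it counts only the minimizing player's move sequences in the union bound, whereas you union-bound over all inner sparse profiles, which is if anything the cleaner way to handle the alternation.
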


\begin{proof}
We will prove the claim by induction. The base case $k=2$ is precisely \cite[Theorem 2]{lipton1994simple} (see also \cite{ALTHOFER1994339}). Our contribution is to generalize their result to larger $k$.

    By way of induction, suppose the claim holds for $k-1$, and consider an instance with $k$ rounds:
    \[
        v \coloneqq \max_{\rho_1 \in\calD_m} \min_{\rho_2 \in\calD_m} \max_{\rho_3 \in\calD_m} \dots \operatorname*{Q^k}_{\rho_{k} \in\calD_m} \Pr\left[M(\rho_1,\dots,\rho_k) = 1\right].
    \]
   Since the complement of this result (where a $\min$ is first instead of a $\max$) follows in the same way, we omit the details.
    
    Fix $\rho_1$ to a distribution that maximizes the acceptance probability (and think of $\rho_1$ as hardcoded into the input). 
    Consider the inner $k-1$ distributions $\rho_2, \ldots, \rho_k$. By the inductive hypothesis, we can simplify these distributions to 
    \[
   \min_{\rho_2 \in U_{t_{k-1}}} \max_{\rho_3 \in U_{t_{k-2}}} \dots \operatorname*{Q^k}_{\rho_{k} \in U_{t_1}} \Pr\left[M(\rho_1,\dots,\rho_k) = 1\right], 
    \]
    while only changing the acceptance probability $v$ by $\pm (k-1)\eps$. 
    In particular, we have that 
    \[
        v^\prime \coloneqq \max_{\rho_1 \in\calD_m} \min_{\rho_2 \in U_{t_{k-1}}} \max_{\rho_3 \in U_{t_{k-2}}} \dots \operatorname*{Q^k}_{\rho_{k} \in U_{t_1}} \Pr\left[M(\rho_1,\dots,\rho_k) = 1\right] \in [v - (k-1) \eps, v + (k-1)\eps]. 
    \]
    We want to show that we can simplify the first distribution $\rho_1$ in a similar fashion. Specifically, we want to show 
    \[
     v^{\prime\prime} \coloneqq  \max_{\rho_1 \in U_{t_k}} \min_{\rho_2 \in U_{t_{k-1}}} \max_{\rho_3 \in U_{t_{k-2}}} \dots \operatorname*{Q^k}_{\rho_{k} \in U_{t_1}} \Pr\left[M(\rho_1,\dots,\rho_k) = 1\right] \in [v - k \eps, v + k\eps]. 
    \]
    
    Observe that choosing $\rho_1$ from $U_{t_k}$ instead of $\calD_m$ can only hurt the maximizing player. That is, the probability that $M$ accepts can only \emph{decrease}, so $v^{\prime\prime} \leq v^\prime + \eps \leq v + k\eps$ is trivial. 
    All that remains is to show that $v^{\prime\prime} \geq v - k\eps$. 
    To prove this, it suffices to show that $v^{\prime\prime} \geq v^\prime - \eps$.
    
    Let $\rho_1^* \in \calD_m$ be a distribution that maximizes the acceptance probability of $M$. 
    Form a multi-set $S$ by drawing $t_k$ independent samples from $\rho_1^*$.
    Consider a string $y \in S$. This gives rise to a random variable on the interval $[0,1]$: 
    \[
    \E_{\rho_2, \ldots, \rho_k}[M(y, \rho_2, \ldots, \rho_k)], 
    \]
    where we are taking the expectation over optimal choices of $\rho_2, \ldots, \rho_k$.
    In expectation over $\rho_1^*$, we have 
    \[
    \E_{y \sim \rho_1^*}\left[ 
    \E_{\rho_2, \ldots, \rho_k}[M(y, \rho_2, \ldots, \rho_k]\right] = v^\prime.
    \]
    Therefore, by Hoeffding's inequality (\cref{fact:hoeffding}), 
    \[
    \Pr\left[ \frac{1}{\abs{S}} \sum_{y \in S} \E_{\rho_2, \ldots, \rho_k}[M(y, \rho_2, \ldots, \rho_k)] \leq v^\prime - \eps \right] \leq \exp\left(-2 t_k \eps^2\right).
    \]
    
    To complete the proof, we must count the number of sequences of distributions the minimizing player can send. 
    The minimizing player sends at most $k/2$ of the distributions $\rho_2, \ldots, \rho_k$, each of which is a uniform distribution over at most $t_{k-1}$-sized subsets of $\{0,1\}^m$. 
    Therefore, in total, there are at most
    \begin{gather*}
        \left(\sum_{i=1}^{t_{k-1}} \binom{2^m}{i} \right)^{k/2} \leq \left(\sum_{i=1}^{t_{k-1}} 2^{im} \right)^{k/2} \leq \left(t_{k-1} 2^{mt_{k-1}}\right)^{k/2} = t^{k/2}_{k-1} 2^{kmt_{k-1}/2}
    \end{gather*}
    possible sequences.
    We want to choose $t_k$ so that 
    \begin{equation}\label{eq:first}
    \exp\left(-2 t_k \eps^2\right) < \frac{1}{t^{k/2}_{k-1} 2^{kmt_{k-1}/2}},
    \end{equation}
    which would imply that strictly less than $1$ of the minimizing player's sequences of distributions can decrease $v^\prime$ by more than $\eps$. 
    Or, more directly, it would imply that there are no sequences the minimizing player can send to decrease $v^\prime$ by more than $\eps$. 
    We will show that choosing $t_k = m^{2k}/2\eps^2$ suffices. 
    By plugging in the definitions of $t_k$ and $t_{k-1}$, \cref{eq:first} becomes 
    \begin{equation}\label{eq:second}
    \exp\left(-m^{2k}\right) < \frac{\eps^k}{ m^{k(k-1)}} 2^{\frac{k}{2}-\frac{km^{2k -1}}{4 \eps^2}} \iff 
    \exp\left(-m^{2k}\right)  \frac{ m^{k(k-1)}}{\eps^k} 2^{\frac{km^{2k -1}}{4 \eps^2}-\frac{k}{2}} < 1. 
    \end{equation}
    We show that the inequality in \cref{eq:second} holds, which proves that our setting of $t_k$ is correct. 
    \begin{align*}
    \exp\left(-m^{2k}\right)  \frac{ m^{k(k-1)}}{\eps^k} 2^{\frac{km^{2k -1}}{4 \eps^2}-\frac{k}{2}} 
    &< \exp\left(-m^{2k}\right) m^{k^2} 2^{\frac{km^{2k -1}}{4 \eps^2}-\frac{k}{2}}  \\
    &< m^{k^2} 2^{\frac{km^{2k -1}}{4 \eps^2}-\frac{k}{2} - m^{2k}}  \\
    &= m^{k^2} 2^{m^{2k -1}\left(\frac{k}{4 \eps^2}-\frac{k}{2m^{2k-1}} - m\right)}  \\
    &< m^{k^2} 2^{-m^{2k -1}}  \\
    &< 1. 
    \end{align*}
    The first inequality holds because $m^{k} > \eps^{-k}$ for constant $\eps > 0$. The second-to-last inequality holds because $\left(\frac{k}{4 \eps^2}-\frac{k}{2m^{2k-1}} - m\right) < -1$ for constant $\eps > 0$.

   We conclude that $v^{\prime \prime} \geq v^\prime - \eps \geq v - k\eps$, which completes the proof. 
\end{proof}

We can now prove that $\mixedPH = \PH$. 

\begin{proof}[Proof of \cref{thm:mixedph=ph}]
    $\PH \subseteq \mixedPH$ follows from the proof that $\PH\subseteq \QPH$.
    This only achieves containment in $\mixedPH$ with constant promise gap, and it puts the $k$-th level of $\PH$ in some higher level of $\mixedPH$ (see \cref{thm:direct-proof-ph-prodqph} for more detail).
    
    To show $\mixedPH \subseteq \PH$, we use \cref{lem:lipton}.
    Set $\epsilon < \frac{1}{12k}$.
    For $\mixedSigmai[k]$, \cref{lem:lipton} implies that
    \[
        \max_{\rho_1\in U_{t_k}} \min_{\rho_2\in U_{t_{k-1}}} \max_{\rho_3 \in U_{t_{k-3}}} \dots \operatorname*{Q^k}_{\rho_k \in U_{t_1}} \Pr\left[M(\rho_1,\dots,\rho_k)\right] \in [v - k\epsilon, v+k\epsilon] \subseteq \left[v-\frac{1}{12} , v+\frac{1}{12}\right].
    \]
    Given the $\mixedSigmai[k]$ promise gap of $\frac{2}{3},\frac{1}{3}$, this modified game has a promise gap of $\frac{7}{12},\frac{5}{12}$.
    
    We simulate this in $\PH$ as follows. To send the distribution $\rho_i$, the prover sends every string in the support of $\rho_i$, which is only $\poly(n)$ many bits by \cref{lem:lipton}. The verifier can then take the list of strings and sample one uniformly at random. 
    This completes the proof since $\PH$ can simulate randomness \cite{sipser1983complexity, LAUTEMANN1983215}. 
\end{proof}

One can also define $\distributionQCPH$ in the same way, and it follows from \cref{thm:mixedph=ph} that this class is equal to $\QCPH$. 
\begin{corollary}\label{cor:distqcph=qcph}
    $\distributionQCPH = \QCPH$.
\end{corollary}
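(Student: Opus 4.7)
The plan is to establish the two inclusions $\distributionQCPH \subseteq \QCPH$ and $\QCPH \subseteq \distributionQCPH$ separately, reusing ideas already developed in the paper.

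For $\distributionQCPH \subseteq \QCPH$, I would mimic the proof of \cref{thm:mixedph=ph}. The key observation is that \cref{lem:lipton} is agnostic to whether the verifier $M$ is a classical randomized Turing machine or a polynomial-time uniform quantum circuit: its proof only uses Hoeffding's inequality applied to the random variable $\Pr[M(y_1,\dots,y_k)=1]$, a quantity well-defined for any verifier whose acceptance probability on fixed classical inputs is a number in $[0,1]$, which holds for a quantum verifier measuring a designated output qubit in the computational basis. So the lemma can be applied to any $\distributionQCSigmai[k]$ protocol with promise gap at least $1/\poly(n)$: choosing $\epsilon$ small enough (say $\epsilon < (c-s)/(3k)$) lets us restrict each prover to uniform distributions over $\poly(n)$-sized multisets of strings while preserving a constant promise gap. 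A $\QCSigmai[k]$ simulation is then immediate: each prover sends the list of strings in the support of their uniform distribution ($\poly(n)$ bits total), and the quantum verifier uses ancilla qubits to pick a uniform random index from each list and runs the original quantum verification on the sampled strings. Since $\BQP$ can simulate classical randomness, this is a legitimate $\QCPH$ verifier.

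For $\QCPH \subseteq \distributionQCPH$, the naive idea of having each prover send a point-mass distribution on their optimal classical proof fails: in $\distributionQCPH$, the responding prover sees only the opponent's distribution, not the sampled string, and so cannot vary their response across strings in the support. To get around this, I would adapt the copy-test construction from \cref{thm:direct-proof-ph-prodqph}: to simulate one round of the $\QCSigmai[k]$ protocol, have the relevant prover occupy $k'$ consecutive rounds in the $\distributionQCPH$ protocol, sending the same distribution each time (so the protocol moves from level $k$ to level $k k'$). The verifier samples one string from each of the $k'$ distributions and checks that all $k'$ samples agree; if not, the responsible player loses. By exactly the analysis in \cref{thm:direct-proof-ph-prodqph}, any dishonest distribution with maximum probability at most $1/2$ passes this test with probability at most $2^{-k'}$, so the honest prover's response against the most-likely string is nearly always the correct response, introducing only additive error $O(2^{-k'})$ per round. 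Choosing $k'$ a sufficiently large constant preserves a constant promise gap.

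The main obstacle will be keeping the error bookkeeping consistent with the definitions of $\distributionQCPH$ and $\QCPH$. The first direction requires $\distributionQCPH$ to be defined with a $1/\poly$ promise gap (as is natural given \cref{lem:lipton}'s additive loss), and one must check that the simplification to uniform multisets does not push us outside the allowed error regime; because $\QCPH$ is known to enjoy error reduction \cite{GSSSY22_qph}, any $1/\poly$ gap suffices. The second direction incurs only an additive $O(2^{-k'})$ error per simulated round and a constant-factor blow-up in the number of rounds, so a $\QCSigmai[k]$ problem is placed in $\distributionQCSigmai[ck]$ for a constant $c$, giving containment of the union. Aside from these housekeeping points, both directions are essentially direct applications of tools already established in \cref{sec:direct-proof-ph-prodqph,sec:mixed}.
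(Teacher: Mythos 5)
Your proposal is correct and follows essentially the same route as the paper: the paper derives this corollary directly from \cref{thm:mixedph=ph}, noting that \cref{lem:lipton} applies unchanged to a quantum verifier (so the provers can send the polynomial-size supports of uniform distributions for one inclusion) and that the reverse inclusion uses the copy-test technique from \cref{thm:direct-proof-ph-prodqph}, exactly as you describe.
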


The \emph{only} difference between $\distributionQCPH$ and $\pureQPH$ is that the former involves proofs that are classical distributions over bit strings and the latter involves proofs that are quantum superpositions over bit strings.
$\distributionQCPH = \QCPH$  is in the counting hierarchy \cite{GSSSY22_qph}, while the best known upper bound for $\pureQPH$ is $\EXP^\PP$ \cite{agarwal2023quantum} and it contains $\QMAtwo$ and $\QPH$.
The conceptual takeaway is that it is only the quantum superposition in the proofs that gives the quantum hierarchies more computational power. 

We also remark that if one allows the distributions in $\distributionPH$ and $\distributionQCPH$ to be correlated, then the techniques in \cref{lem:ent-qph-collapse} can be used to collapse the resulting hierarchies to the second level. The correlated version of $\distributionPH$ collapses to $\StwoP$. 
The correlated version of $\distributionQCPH$ collapses to a quantum-classical version of $\QRGone$, which, to our knowledge, has never been studied.

\section{Open Problems}\label{sec:openproblems}

It is well-known that $\PH$ can equivalently be defined via oracle Turing machines. 
This suggests oracular definitions of quantum polynomial hierarchies, such as $\QMAH$ discussed in \cref{subsec:related-work}. 
One could similarly define $\mathsf{\QCMAH}$ as $\QCMA^{\QCMA^{\QCMA^{\ldots}}}$ and $\mathsf{QMA(2)H}$ as $\QMAtwo^{\QMAtwo^{\QMAtwo^{\ldots}}}$.
We ask how these oracular hierarchies compare to the quantifier-based ones.

\begin{question}\label{question:oracle-vs-quantifier}
    Does $\QEPH = \QMAH$? $\QPH = \mathsf{QMA(2)H}$? $\QCPH = \mathsf{QCMAH}$?
\end{question}

It is unclear if these hierarchies are equal, as in the classical world, or if one version would be stronger than the other.
One immediate obstacle is the fact that $\QEPH$ and $\QPH$ are quantifying over quantum states, so perhaps it is easier to begin with $\QCPH$, which still quantifies over classical bits. 
Alas, it is still unclear if an oracle machine definition of $\QCPH$ would be equal to a quantifier definition, since, in the oracular case, queries can be made in superposition.

Answering \cref{question:oracle-vs-quantifier} could yield progress towards characterizing $\QRGone$.
Jain and Watrous showed that $\QRGone \subseteq \PSPACE$ in 2009 \cite{jain2009parallel}, and, since then, no improved upper bounds have been proven despite effort \cite{ghosh2023complexity}. 
Our work shows that $\QRGone = \QEPH$, and, if one can show $\QEPH \subseteq \QMAH$, then that would imply $\QRGone \subseteq \CH$, since \cite{V18_BQPHthesis} showed that $\QMAH \subseteq \CH$.

More broadly, proving better upper or lower bounds on the quantum polynomial hierarchies and finding more connections to other parts of complexity theory are important directions for future work. 
For example, does any level of $\QPH$ contain $\PSPACE$? 
Can one improve the containment $\QPH \subseteq \EXP^\PP$?
Or, how can these hierarchies be used to better understand the relationships between $\QCMA$, $\QMA$, and $\QMAtwo$?

\section*{Acknowledgements}
We thank Khang Le, Daniel Liang, William Kretschmer, Siddhartha Jain, and Scott Aaronson for helpful conversations. Joshua Cook was especially helpful at early stages of this project. We thank John Watrous for identifying an error in an earlier draft of this work.

\bibliographystyle{alphaurl}
\bibliography{refs}

\end{document}